\newtheorem{theorem}{Theorem}
\newtheorem{remark}{Remark}
\newtheorem{proposition}{Proposition}
\newtheorem{example}{Example}
\newtheorem{lemma}{Lemma}
\title{\LARGE DNA Tails for Molecular Flash Memory} 
\author{%
  \IEEEauthorblockN{\normalsize\textbf{Jin Sima}\IEEEauthorrefmark{1},
                    \textbf{Chao Pan}\IEEEauthorrefmark{2},
                    \textbf{S. Kasra Tabatabaei}\IEEEauthorrefmark{3},
                    \textbf{Alvaro G. Hernandez}\IEEEauthorrefmark{4},
                    \textbf{Charles M. Schroeder}\IEEEauthorrefmark{5}\IEEEauthorrefmark{6}\IEEEauthorrefmark{7}
                    and \textbf{Olgica Milenkovic}\IEEEauthorrefmark{1}\IEEEauthorrefmark{8}
                    \\
                    }
  \IEEEauthorblockA{\IEEEauthorrefmark{1}%
                   \small{Department of Electrical and Computer Engineering, University of Illinois Urbana-Champaign, }\texttt{\{jsima,milenkov\}@illinois.edu}\\}
  \IEEEauthorblockA{\IEEEauthorrefmark{2}%
                    Google, \texttt{chaopan@google.com}\\}
  \IEEEauthorblockA{\IEEEauthorrefmark{3}%
                    New England BioLabs, \texttt{stabatabaei@neb.com}\\}
  \IEEEauthorblockA{\IEEEauthorrefmark{4}%
                    Roy J. Carver Biotechnology Center, University of Illinois Urbana-Champaign, \texttt{aghernan@illinois.edu}\\}
  \IEEEauthorblockA{\IEEEauthorrefmark{5}%
                    Center for Biophysics and Quantitative Biology, University of Illinois Urbana-Champaign, \texttt{cms@illinois.edu}\\}
  \IEEEauthorblockA{\IEEEauthorrefmark{6}%
                    Beckman Institute for Advanced Science and Technology, University of Illinois Urbana-Champaign\\}
  \IEEEauthorblockA{\IEEEauthorrefmark{7}%
                    Department of Materials Science and Engineering, University of Illinois Urbana-Champaign\\}
  \IEEEauthorblockA{\IEEEauthorrefmark{8}
                    Center for Artificial Intelligence and Modeling, Carl R. Woese Institute for Genomic Biology, University of Illinois Urbana-Champaign}
}
\begin{document} 
\baselineskip24pt

\maketitle 
\vspace{-0.25in}

\begin{abstract}
  DNA-based data storage systems face practical challenges due to the high cost of DNA synthesis. A strategy to address the problem entails encoding data via topological modifications of the DNA sugar-phosphate backbone. The DNA Punchcards system, which introduces nicks (cuts) in the DNA backbone, encodes only one bit per nicking site, limiting density. We propose \emph{DNA Tails,} a storage paradigm that encodes nonbinary symbols at nicking sites by growing enzymatically synthesized single-stranded DNA of varied lengths. The average tail lengths encode multiple information bits and are controlled via a staggered nicking-tail extension process. We demonstrate the feasibility of this encoding approach experimentally and identify common sources of errors, such as calibration errors and stumped tail growth errors. To mitigate calibration errors, we use rank modulation proposed for flash memory. To correct stumped tail growth errors, we introduce a new family of rank modulation codes that can correct ``stuck-at'' errors. Our analytical results include constructions for order-optimal-redundancy permutation codes and accompanying encoding and decoding algorithms. 
\end{abstract}
\section{Introduction}
DNA-based data storage systems provide distinct advantages over conventional magnetic, optical, and flash storage media in terms of data storage density, data longevity, and energy efficiency~\cite{church2012next,goldman2013towards,grass2015robust,yazdi2015dna1}. They also offer random-access and rewriting solutions, made possible through controlled polymerase chain reaction (PCR) and overlap-extension PCR reactions~\cite{yazdi2015rewritable1}, or specialized microelectronic circuitry~\cite{khandelwal2022self}. 
The systems can be made portable through the use of nanopore sequencers~\cite{yazdi2017portable}, and adapted to write and read using chemically modified DNA~\cite{tabatabaei2022expanding}. Nevertheless, they still have not been broadly adopted due to substantial implementation challenges such as the high cost of DNA synthesis.

One strategy to mitigate the use of expensive synthetic DNA is to create topological modifications on native DNA backbones to encode user-defined information. The first known system to use topological modifications of the form of nicks (cuts) in one of the backbones of the double-helix is \emph{DNA Punchcards}~\cite{tabatabaei2020dna,pan2022rewritable}. However, DNA Punchcards encode only a single bit of information at each nicking site, thereby offering only a fraction of the recording density achievable by sequence-content storage mechanisms. To bridge the gap between the storage densities of DNA Punchcards and sequence-based storage systems, one needs to find a way to increase the alphabet size available for storing information at the nicking sites. We hence propose to encode nonbinary information at the nicking sites by using an approach inspired by classical flash memory where cell charges represent nonbinary values. We refer to our new approach as \emph{DNA Tails}, since nonbinary information at each nicking site is recorded via enzymatically synthesized single-stranded ``tails,'' whose \textbf{quantized average lengths} represent multiple bits of information. The challenge of controlling the ranges of lengths of the enzymatically synthesized DNA tails is addressed through a staggered nicking-tail extension approach and the use of rank modulation coding~\cite{jiang2009rank}. With this design, the average tail lengths are dictated by the time at which their corresponding sites were nicked. 

We implement the DNA Tail system and test it experimentally. The experiments show that a common source of errors is that tails unexpectedly stop growing after a certain number of rounds of extensions, which we call "stumped" tails. As a result, the information sequence carried by DNA tails suffers from "stuck-at" errors, where some symbols get stuck at lower, incorrect values.  We consider three models of "stuck-at" error scenarios, where: (a) $t$ symbols get stuck at a value lower by $1$ than intended; (b) $t$ consecutive symbols get stuck at the lowest of their values; (c) a single symbol gets stuck at a lower value, and only relative rankings of the remaining symbol values are observed. 

We propose new code constructions and encoding/decoding algorithms for each of the three error models. Our codes for model (a) use Lehmer encoding, which was also used in~\cite{barg2010codes} for classical rank modulation coding. For model (b), we propose a code where the permutation is split into subblocks based on symbol values. Moreover, we use two interleaved splits of the permutation to correct errors. Our codes for model (c) may be viewed as codes correcting a constrained combination of a single deletion and a single erasure in a nonbinary sequence, which is based on a generalization of the Tenengolts codes for correcting a single deletion in nonbinary sequences. We also use Lehmer encoding tailored to permutations. We complement all the constructions with encoding/decoding algorithms that transform information strings into permutations and vice-versa. 

The paper is organized as follows. Section~\ref{sec:experiments} describes the system and experimental results that motivate our analysis. Section~\ref{sec:errormodels} contains a description of the error models, while  code constructions and encoding/decoding algorithms are presented in Section~\ref{sec:codes}.

\section{Experimental System Design and Error Analysis}\label{sec:experiments}

The gist of our approach is to encode nonbinary symbols (labels) using different lengths of single-stranded DNA strings grown at specific nicking (cutting) sites of double-stranded DNA. The sites at which tails are grown are termed nicking sites, while the overall storage paradigm is henceforth referred to as DNA Tails. For DNA, the sugar-phosphate backbone locations naturally serves as a linear order for the encoded symbols.  Following this idea, we designed and implemented a DNA Tail scheme as depicted in Figure~\ref{fig:general} (a), where the tails are single-stranded DNA fragments enzymatically synthesized on double-stranded substrates. The writing process consists of several rounds of enzymatic nicking at preselected locations (indicated by light green crosses on the DNA duplexes and marked by $0$s in the top row of Figure~\ref{fig:general} (a)) and ``labeling.'' The results are tails whose different random lengths represent different symbols of a large coding alphabet\footnote{Note that, as for sequence-based encoding, pools of $100$s of DNA strings encode the same information; since in our case, the tail lengths are random variables, we work with the average tail lengths for each designated nicking location. Furthermore, we quantize the tail lengths in order to allow for a range of tail-length values to represent the same symbol}. Label $0$ stands for an undisturbed location (no nick, nor tail), label $1$ stands for a nicked location without a tail, while all larger labels (e.g., $2$-$7$) correspond to average lengths of the DNA tails of different lengths. The smaller the label, the shorter the average length of the tail. To control the relative average tail lengths, we partition the locations of the tails to be synthesized according to the value of the label, in decreasing order. For example, to ``$70216745$'' at consecutive preselected locations, we start with the two locations that are to store $7$. These are the first locations that we nick, as indicated in the second row of Figure~\ref{fig:general} (b). Upon this first nicking round, DNA tails are grown under controlled conditions, leading to relatively short tail lengths. The locations where the symbol $6$ appears are nicked next, followed by enzymatic synthesis at all ``exposed'' sites -- i.e., those that are nicked and those that contain tails. Since the sites corresponding to label $7$ are subject to two rounds of enzymatic synthesis, their lengths are expected, on average, to be longer than those of label $6$, as illustrated in the fifth row of the figure. Proceeding, we arrive at the construct in the sixth row in which the tails are of different lengths proportional to the symbol value to be stored. 

However, as will become evident from the experiments, relying on the exact values of average tail lengths to determine the encoded symbol is often infeasible due to calibration errors (i.e., not knowing very precisely which tail length rages correspond to which symbols). On the other hand, the staggered nicking-tail extension process naturally guarantees that the nicked sites exposed to the most tail extension rounds will have the longest DNA tails with high probability. This motivates us to use relative ordering of the average tail lengths rather than their values, akin to rank modulation~\cite{jiang2009rank,farnoud2012rank},  illustrated in Figure~\ref{fig:general} (c,d). There, to avoid absolute errors,  the exact values are replaced by rank-ordered symbols indicating the largest, second largest, third largest, etc., charge or tail length. Even after charge leakage of all cells or equally reduced tail growths, one still expects the relative order to be preserved. To understand how to implement a rank modulation-like encoding with DNA Tails, one can think of replacing electrons and cells with bases and nicking sites, in which case each average tail length has to be sufficiently different from any other. This makes the recording process less susceptible to errors encountered in the general scheme but at the cost of an increased number of nicking-labeling rounds. For the general scheme, the number of rounds equals the value of the largest label, while for rank modulation scheme, the number of rounds equals the number of distinct nonzero labels.

\begin{figure}
\centering
\includegraphics[width=0.95\linewidth]{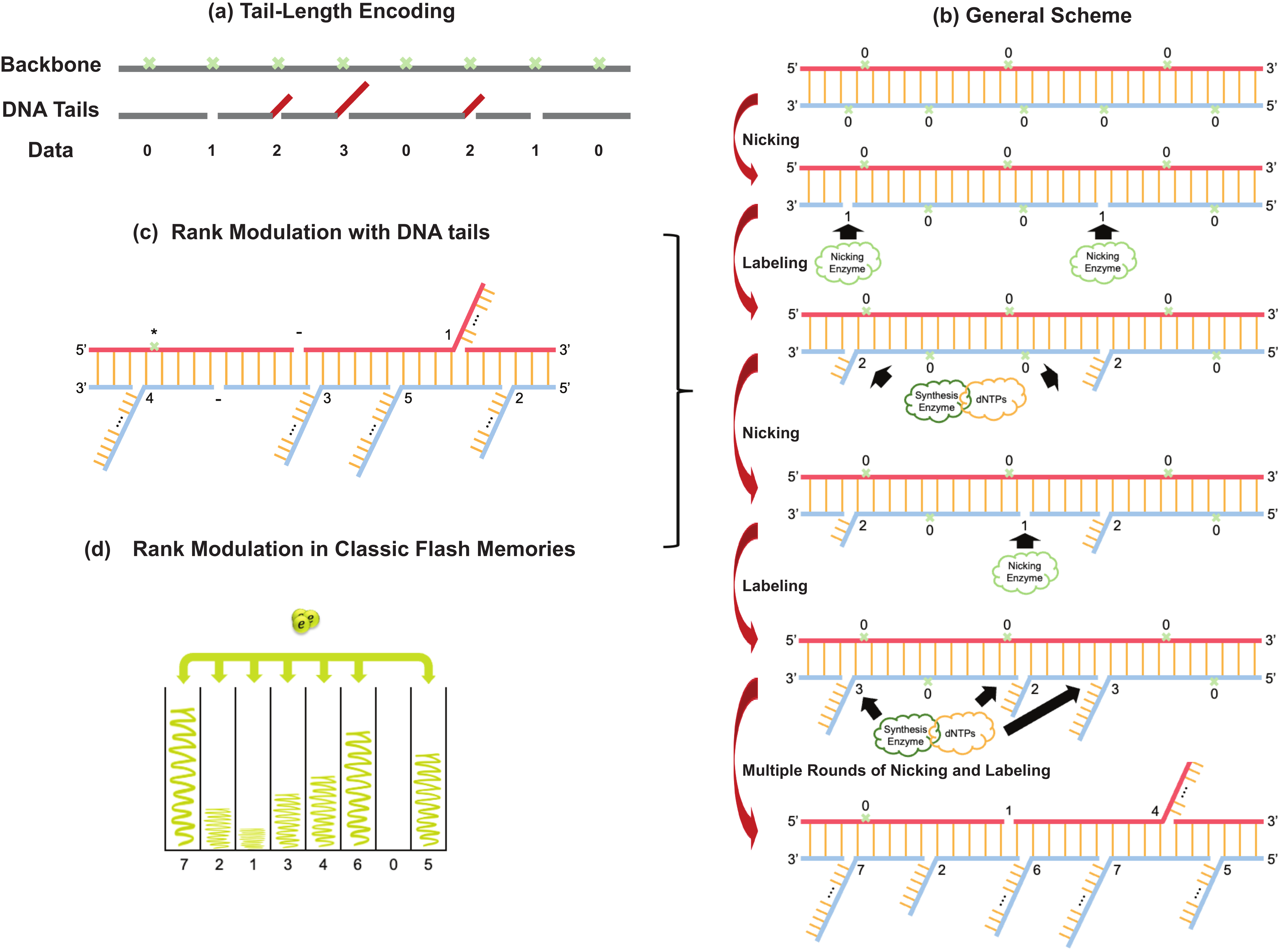}
\vspace{-0.1in}
\caption{An overview of our DNA Tail framework. \textbf{(a)} Schematic of Tail-length encoding, showing the locations were tails can be grown according to the  natural order on the DNA backbone. \textbf{(b)} Schematic of the general multi-round, nonbinary approach for recording information in DNA tails (i.e., single-stranded DNA fragments enzymatically synthesized on double-stranded substrates). For low-cost, we use native restriction endonucleases for nicking and the TdT polymerase for tail growths. \textbf{(c,d)} Schematics of \emph{rank modulation} for tail and cell ``charges.''}
\vspace{-0.25in}
\label{fig:general}
\end{figure}

We used the Tail encoding technique to encode real information in different contexts. Specifically, we illustrate an example of topological tail encoding of metadata equal to the number $20$ on the backbone of synthetic DNA image of Novak Djokovic playing tennis shown in Figure~\ref{fig:rank} (a) to indicate the number of Grand Slam single titles he won until 2021, and the metadata $5030$ into the image of a beach in Uruguay shown in Figure~\ref{fig:rank} (a) to indicate the country's world cup championship years (1930, 1950). We used IDT gBlocks of length $1,000$ bps to record the image content of these two images; metadata is recorded via the general scheme in Figure~\ref{fig:general} (a). The images were first compressed using JPEG, parsed into blocks of length $35$ bits each, and then mapped to DNA sequences of length $19$ nts. The redundant $1.5$ bits per block ensure balanced $GC$ content ($45\%-55\%$) and eliminate homopolymers of length $\geq 3$ nts. To enable random access to different images, we also included pairs of unique prefix and suffix primer sequences for each of the images. Furthermore, to indicate the order of the sequences within the image, we use address blocks of length $3$ nts. We also added $7$ random bases at predefined locations to lower IDT ``synthesis complexity.''

Our experimental results are depicted in Figure~\ref{fig:rank}. In \textbf{(b)}, we show the results of recording a signature DNA tail $20$ on a synthetic DNA image on the right in (a). The value $20$ is nicked into gBlocks by using a combination of two nicking enzymes, Nb.BtsI and Nb.BssSI. To determine how to decode the tail lengths to label values, we performed extensive calibration experiments. The plot summarizes the relationship between the average tail length and the corresponding label for up to $6$ cycles of tail extensions (with $r^2$ denoting the squared fitting error). For an average tail length of $18.16$ as shown in the left plot, the fitted calibration model indicates a corresponding label of $2.46$, marked in red. Since $2.46$ is closer to $2$ than $3$, the label is decoded as $2$. The label $0$ can be perfectly recovered, as it corresponds to the absence of any modifications. In \textbf{(b)}, we provide matching results for gBlocks encoding the right image in (a), with the superimposed value $5030$. Encoding is performed using a combination of four nicking enzymes, Nb.BsmI, Nt.Bpu10I, Nb.BsrDI, and Nb.BssSI. In this case, label $5$ is erroneously read as $6$, while label $3$ is erroneously read as $4$. This further motivates the use of \emph{rank modulation} coding which only requires that the average lengths of the tails be rank-ordered with a sufficiently large difference in values. (d) Rank modulation experiments on the encoding of the poem \emph{A Dream Within a Dream} by E. A. Poe using three gBlocks, with a topologically encoded book ISBN numbers $4570015$ in Poem-GBlock 1, cipher $5010126054$ (Poem-GBlock 2), and $0040721$ (Poem-GBlock 3). The characters of the poem were first converted to binary sequences in ASCII format, parsed into blocks, and mapped to DNA sequences. We note that all rankings of tail lengths (\textbf{(e)}) are consistent with the magnitude of the label, except in Poem-GBlock 1. There, the tail length corresponding to label $7$ ($498.2$) is unacceptably short, falling within the range of lengths designated for label $5$ ($459.07\sim 501.5$). No such inconsistencies are observed in the other two gBlocks. The identified errors suggest that it is possible for some tails to stop growing even in the rank modulation setting, and such errors are studied in the theoretical analysis to follow. 

\begin{figure}
\centering
\includegraphics[width=\linewidth]{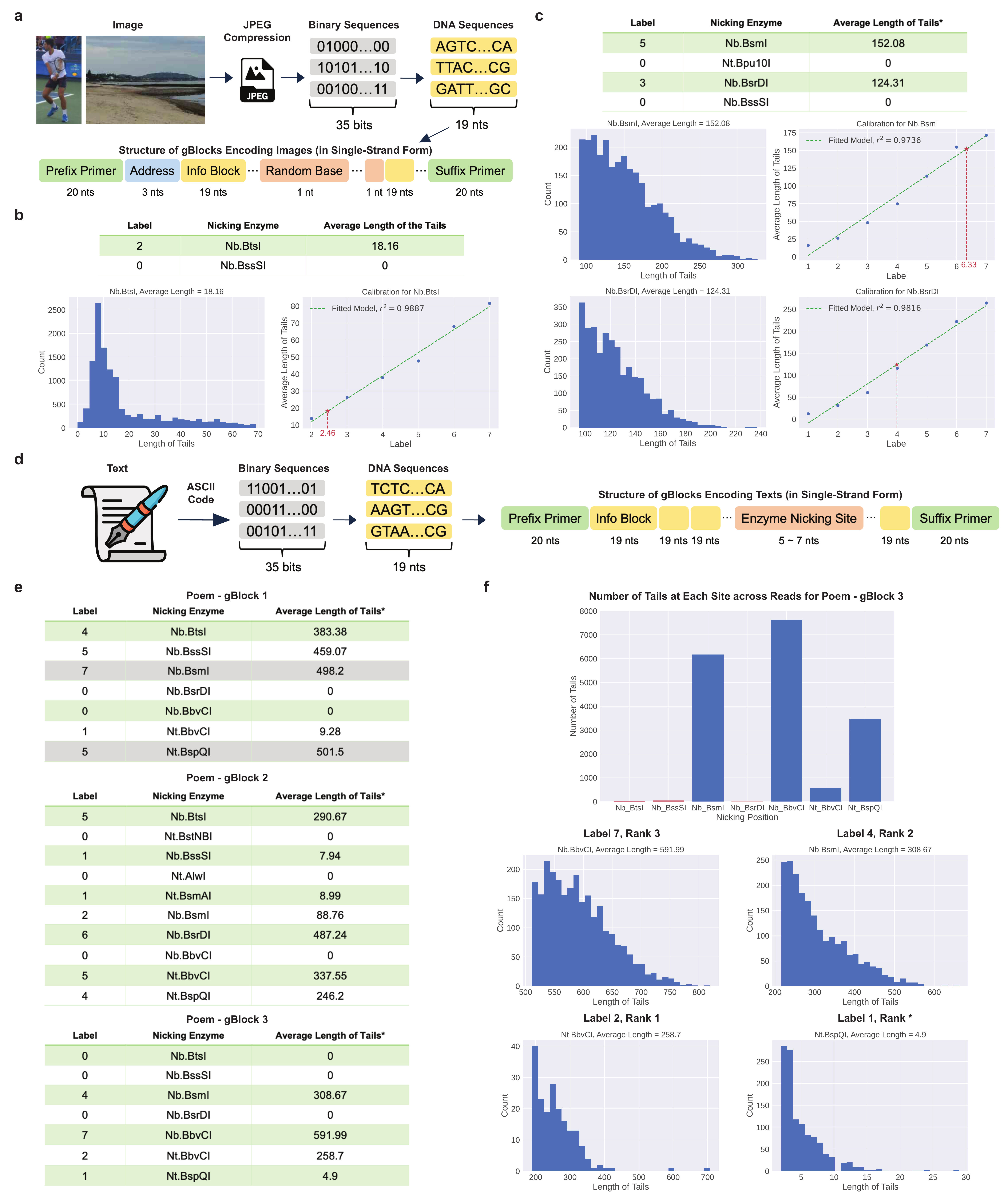}
\caption{\textbf{(a)} Schematic of the image encoding procedure, explained in the main text. 
\textbf{(b)} Results of recording a signature DNA tail $20$ on the first synthetic DNA image. \textbf{(c)} Matching results for gBlocks encoding the right image in (a), with the superimposed value $5030$. (d) Rank modulation experiments on the encoding of the poem \emph{A Dream Within a Dream} by E. A. Poe using three gBlocks. \textbf{(e)} Rank modulation errors.}
\label{fig:rank}
\end{figure}

\clearpage

\section{Error Models for DNA Tails}\label{sec:errormodels}

As evidenced by the experimental results, during tail extension, long tails may experience stumped growth. Moreover, the tail length are random. Therefore, the measured averaged lengths have to be quantized. As a result, the quantized length of a tail corresponding to a larger label can be indistinguishable from that of a tail corresponding to a smaller symbol. These issues introduce new models for rank modulation errors, as described below.

Assume that the DNA tail lengths are encoded via permutations $\sigma = (\sigma(1),\ldots,\sigma(n))\in\mathcal{S}_n$ of length $n$; 
here, $\mathcal{S}_n$ denotes the set of all permutations, i.e., the symmetric group of order $n!$. The value of a symbol in the permutation represents the quantized tail length at the corresponding nicking site.  For example, the permutation $\sigma=(1,5,2,4,3)$ may represent tail lengths at five nicking sites where the first tail has the shortest length (i.e., length falling in the first quantization bin), and the second has the longest length (i.e., length falling in the last quantization bin). Now, the tail at the fifth nicking site may have stopped properly growing starting from the fourth round or nicking, which could have resulted in it being quantized to $2$, so that $\sigma(2)=2$. That would lead to an erroneous readout $\sigma_e=(1,5,2,4,2)$ from the quantized tail length measurements. The resulting $\sigma_e$ is no longer a permutation due to quantization of average tail lengths, but rather what we refer to as a \emph{multiset permutation} in the sense that it can have repeated or missing values. Also, note that we know that at least one of the two $2$ symbols had to be correct, which provides additional information that can be exploited in the code design process. We hence present three new error models that capture how tail extension and quantization processes affect the permutation received at the decoder. 

\textbf{Tails stuck at a quantized length shorter by $1$.}  This model pertains to the case that some tails did not grow in at most one round of extension. Hence, a tail that corresponds to the label $\sigma(i)$ may have an average length that is indistinguishable from that of a tail that corresponds to the label $\sigma(i)-1$. In addition, the tail growth saturation phenomena may arise only for long tails. In this case, the stuck-at errors only occur when $\sigma(i)$ is greater than a threshold $m$. More specifically, let $t$ be the total number of stuck-at errors. Let  $\sigma\in\mathcal{S}_n$ be the permutation encoding user data and let $\sigma_e\in [n]^n,$ where $[n]=\{1,\ldots,n\}$ for any positive integer $n$, be a sequence of quantized tail lengths identified after the average tail quantization processes. A stuck-at error occurs when $\sigma_e(i)=\sigma(i)-1$ for some $i$ such that $\sigma(i)>m$. Hence, the resulting permutation satisfies
\begin{align}\label{eq:kstuckerror}
    \sigma_e(i)=\begin{cases}
        \sigma(i)-1, &\mbox{for $i\in\{i_1,\ldots,i_t\}$ such that $\sigma(i)>m$,}\\
        \sigma(i),&\mbox{for $i\in [n]\backslash\{i_1,\ldots,i_t\}$.}
    \end{cases}
\end{align}
The following is an example of such errors. 
\begin{example}\label{ex:1}
    Let $n=9,t = 3, m=3, \sigma=(9,1,4,2,5,8,3,6,7)$, and $\sigma_e=(8,1,4,2,4,8,3,6,$ $6)$. Then stuck-at errors occurred at nicking sites $1,5$ and $9$, impacting $\sigma(1),\sigma(5)$, and $\sigma(9)$. 
\end{example}
While the stuck-at errors described by~\eqref{eq:kstuckerror} can be considered as $2t$ erasure errors in $\sigma$, we note that these $t$ stuck-at errors are easier to correct than $2t$ general erasure errors since stuck-at errors occur in a permutation sequence and affect only symbols with adjacent values. We will show that the redundancy needed to correct $t$ stuck-at errors is less than that needed to correct $2t$ erasures. Note that a related type of errors is the stuck-at error in write-once memories~\cite{kuznetsov1974coding,wachter2015codes}, where symbols get stuck at a fixed value, but the codewords are not necessarily permutations. In the models considered in this paper, the symbols can be stuck at different values and the codewords are restricted to be permutations.

\textbf{Tails of consecutive lengths stuck at the same length.}
In this model, tails corresponding to consecutive symbol values may stop growing after reaching a certain round of extension. As a result, the average lengths of the corresponding tails are quantized to the lowest observed tail-length value. For example, when encoding $\sigma=(1,6,5,2,4,3)$, the tails at the third and fifth nicking site may have stop growing after they reached the quantized length of bin $3$. Then, the resulting multiset permutation becomes $\sigma_e=(1,6,3,2,3,3)$. 
We say a burst of stuck-at errors of length at most $t$ occur in $\sigma$ if the resulting permutation  $\sigma_e(i)=j$ for all $i$ such that $\sigma(i)\in\{j,j+1,\ldots,j+t_1-1\}$ for some $j\in[n]$ and $t_1\in[t]$, i.e.,
\begin{align}\label{eq:consecutivestuck}
    \sigma_e(i)=\begin{cases}
        j, &\mbox{for $i\in\{i_1,\ldots,i_{t_1}\}$, such that $\sigma(i_\ell)>m$ and $\sigma(i_\ell)=j+\ell-1$, $\ell\in[t_1]$,}\\
        &\mbox{$t_1\in[t]$},\\
        \sigma(i),&\mbox{for $i\in [n]\backslash\{i_1,\ldots,i_{t_1}\}$}.
    \end{cases}
\end{align}
The following is an example of a burst of stuck-at errors. 
\begin{example}\label{ex:2}
    Let $n=15,t = 3, m=4, \sigma=(9,1,4,2,5,14,10,3,6,13,11,7,12,8,15)$, and $\sigma_e=(8,1,4,2,5,14,8,3,6,13,11,7,12,8,15)$. Then the burst stuck-at error occurs at $\sigma(1)$, $\sigma(7)$, and $\sigma(14)$. 
\end{example}
While the errors described in~\eqref{eq:consecutivestuck} may be viewed as burst erasure errors of length $t$ in $\sigma^{-1}$, we subsequently show that the redundancy needed for correcting stuck-at errors is smaller compared to that of erasures since the former arise in permutations.

\textbf{Tails stuck at a quantized lengths shorter by at most $t$, with tail length rank orderings.} Since the tail length growth is hard to control, it is often hard to recover the label of a tail by measuring its length and quantizing it. Instead, it may be more informative to identify the label of a tail through direct rankings of average tail-lengths. In this case, the labels of multiple (as many as $n-t-m$) tails change as a result of a single tail stuck at a lower length. We consider a single tail length stuck-at error, where a symbol $\sigma(i)>m$ gets stuck at a value $\sigma_e(i)=\sigma(i)-t_1$ for $t_1\in[t]$. The values of the symbols $\sigma(j)$, $\sigma(j)\in [\sigma(i)-1]$ stay the same. In addition, since only relative ranking of quantized length are observed, all symbols with value at least $\sigma(i)+1$ decrease by $1$. Therefore,
\begin{align}\label{eq:rankmod}
    \sigma_e(i)=\begin{cases}
        \sigma(i)-t_1, &\mbox{for some $i=i_1\in[n]$, such that $\sigma(i_1)>m$},\\
        \sigma(i)-1,&\mbox{for $i\in [n]$ such that $\sigma(i)>\sigma(i_1)$},\\
        \sigma(i),&\mbox{else}.
    \end{cases}
\end{align}
\begin{example}\label{ex:3}
    Let $n=9,t = 3, m=3, \sigma=(9,1,4,2,5,8,3,6,7)$, and $\sigma_e=(8,1,4,2,2,7,3,5,$ $6)$. The error that occurs at $\sigma(5)$ results in changes of values of the symbols $\sigma(1),\sigma(5),\sigma(6),\sigma(8),$ and $\sigma(9)$.    
\end{example}
The errors described in~\eqref{eq:rankmod} are related to translocation errors in the 
Ulam distance for rank modulation. While the stuck-at errors in~\eqref{eq:rankmod} can be corrected using codes in the Ulam metric~\cite{farnoud2013error,hassanzadeh2014multipermutation}, we note that the errors in~\eqref{eq:rankmod} preserve part of the positional information about the errors, which is in contrast with the Ulam metric errors for which no  positional information is available. Hence, it is possible to correct stuck-at errors with less redundancy when compared to correcting translocation errors in the Ulam metric.

\section{Codes for $t$ stuck-at errors}\label{sec:codes}
We provide next code constructions for the error models described in Section \ref{sec:errormodels}.  
\subsection{The $t$ stuck-at error model}\label{sec:kstuckerrors}
We start with the $t$ stuck-at error case described in \eqref{eq:kstuckerror} and illustrate the idea through Example \ref{ex:1}. Let the data be encoded by a permutation $\sigma=(9,1,4,2,5,8,3,6,7)$ of length $n=9$. To protect $\sigma$ from at most $t=3$ stuck-at errors that occur at symbols with values larger than $m=3$, we use Lehmer codes (which will be rigorously defined later) of the same length as $\sigma$. In Lehmer encoding of a permutation $\sigma$, the symbol at position $i$ is given by the number of symbols in $\sigma$ that precede position $i$ and have values greater than $\sigma(i)$. For example, the Lehmer encoding of $\sigma=(9,1,4,2,5,8,3,6,7)$ equals $(0,1,1,2,1,1,4,2,2)$. For error correction purposes, we consider the modulo $2$ reduction of the Lehmer encoding of $\sigma$, given by $(0,1,1,0,1,1,0,0,0)$ for the running example. It will be shown that $t$ stuck-at errors result in at most $t$ substitution errors in the modulo $2$ reduction of Lehmer encodings. 
To correct $t$ such substitution errors with known locations in the vector, it suffices to use a $t$-erasure correcting Reed-Solomon code with at most $t\log (n-m)$ redundant bits. 
In addition, one can recover $\sigma$ from $\sigma_e$ and the modulo $2$ reduction of the Lehmer encoding of $\sigma$. 

Since codewords are permutations in our model, one needs to encode the binary Reed-Solomon code redundancy into ``permutation symbols.'' We utilize the fact that only symbols with values larger than $m$ can be affected by errors and assume that $m\ge  \frac{t\log (n-m)}{\log n}+2$, which is typically the case in our experiments. We then use the positional information of the symbols in $[\lceil \frac{t \log (n-m)}{\log n}\rceil]$ to store the redundant symbols. The symbols $[n+\lceil \frac{t \log (n-m)}{\log n}\rceil]\backslash [\lceil \frac{t \log (n-m)}{\log n}\rceil]$ encode the information in $\sigma$, where each symbol $\sigma(i)$ is simply encoded as $\sigma(i)+\lceil \frac{t \log (n-m)}{\log n}\rceil$. 
For example, assume that the Reed-Solomon redundancy is given by three $9$-ary symbols, $(1,0,7)$. In this case, we increase each entry in $\sigma$ by $3$ so that $\sigma =(12,4,7,5,8,11,7,9,10)$ and then insert symbols $1,2$, and $3$ after the 1st, 0th (which is before the first), and 7th entry in $\sigma$ to obtain the encoded permutation $(2,12,1,4,7,5,8,11,7,3,9,10)$.

In what follows, we provide more details about the encoding and decoding procedures, and prove the following theorem, which shows that the stuck-at errors can be corrected by adding at most $t$ redundant symbols to the permutation $\sigma$.
\begin{theorem}\label{thm:kstuckerror}
For any message given in the form of a permutation $\sigma$ of length $n$, there is an encoder mapping $\mathcal{E}:\mathcal{S}_n\rightarrow\mathcal{S}_{n+t'}$ that maps $\sigma$ to a permutation $\mathcal{E}(\sigma)$ of length $n+t'$, where $t'\ge \frac{t\log (n-m)}{\log n}$. Moreover, $\mathcal{E}(\sigma)$ can be corrected from at most $t$ stuck-at symbol errors defined in~\eqref{eq:kstuckerror}, given $m\ge t'+2$.
\end{theorem}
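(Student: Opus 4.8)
The plan is to realize the encoder $\mathcal{E}$ as a three-stage pipeline — Lehmer encoding, error-control redundancy over the modulo-$2$ Lehmer vector, and re-embedding of that redundancy back into permutation form — and then argue that each stage is invertible given the corrupted readout $\sigma_e$. First I would set $t' = \lceil \frac{t\log(n-m)}{\log n}\rceil$ and formally define the Lehmer encoding $L(\sigma)\in\prod_{i=1}^n\{0,\ldots,i-1\}$ by $L(\sigma)(i) = |\{j<i : \sigma(j)>\sigma(i)\}|$, noting the standard fact that $\sigma\mapsto L(\sigma)$ is a bijection onto the mixed-radix product set, so $\sigma$ is recoverable from $L(\sigma)$. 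The construction: given the message $\sigma$, compute $b = L(\sigma)\bmod 2\in\{0,1\}^n$, feed $b$ to a $t$-erasure-correcting Reed–Solomon code to produce roughly $t\log(n-m)$ redundancy bits, pack these into $t'$ symbols over an alphabet of size $n$, shift every value of $\sigma$ up by $t'$, and insert the $t'$ redundancy symbols (drawn from $\{1,\ldots,t'\}$) at the positions dictated by their Reed–Solomon coordinates, exactly as in the worked example. This yields $\mathcal{E}(\sigma)\in\mathcal{S}_{n+t'}$; the value shift guarantees the inserted low-value symbols do not collide with shifted data symbols, and by hypothesis $m\ge t'+2$ ensures the threshold behavior is preserved under the shift (a data symbol that was $>m$ becomes $>m+t'$, still above the new threshold, while the redundancy symbols have values $\le t'<m$ and are therefore never corrupted).

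The decoding argument is where the real content lies, and I would organize it as a chain of claims. \textbf{Claim 1:} the decoder can identify and strip the $t'$ inserted redundancy symbols from $\sigma_e$. Since these symbols take values in $\{1,\ldots,t'\}$ and are never affected by stuck-at errors (their values are $\le t'< m$, below the error threshold), they appear verbatim; the decoder locates all symbols with value $\le t'$, reads off both their values and their positions-within-the-remaining-sequence, reconstructs the Reed–Solomon codeword coordinates, and removes them. What remains is a length-$n$ multiset sequence which, after subtracting $t'$ from every entry, equals the stuck-at-corrupted version of the original $\sigma$. \textbf{Claim 2 (the key lemma):} $t$ stuck-at errors of the form \eqref{eq:kstuckerror} induce at most $t$ coordinate changes in $L(\sigma)\bmod 2$, and moreover these changed coordinates lie at the error positions $i_1,\ldots,i_t$ (which, after Claim 1, the decoder can locate by finding the repeated/missing values in the multiset). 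This is the heart of the proof: when $\sigma(i)$ drops to $\sigma(i)-1$, the comparison of position $i$ against every earlier position $j$ changes only if $\sigma(j)$ equals exactly $\sigma(i)-1$ or $\sigma(i)$ — but values are distinct, so at most one such $j$ exists, changing $L(\sigma)(i)$ by at most one and hence flipping at most one bit of the modulo-$2$ reduction at coordinate $i$; and the comparison of a later position $k$ against $i$ changes only if $\sigma(k)=\sigma(i)$, impossible for $k$ not itself an error position, or if $k$ is itself an error position, in which case it is already counted among the $\le t$ changed coordinates. One must check the case of two error positions interacting, but the same distinctness argument confines all damage to the $t$ indices $\{i_1,\ldots,i_t\}$.

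Given Claim 2, the decoder treats the $\le t$ indices $i_1,\ldots,i_t$ as erasures in $b=L(\sigma)\bmod 2$, uses the Reed–Solomon redundancy recovered in Claim 1 to fill them in, and thereby recovers $b$ exactly. \textbf{Claim 3:} from the corrupted multiset (shifted back down) together with the correct $b$, one recovers $\sigma$. Indeed, at each error position $i_\ell$ the decoder knows $\sigma_e(i_\ell)\in\{\sigma(i_\ell)-1,\sigma(i_\ell)\}$ and the correct parity bit $L(\sigma)(i_\ell)\bmod 2$; recomputing the Lehmer coordinate from the partially-known prefix and using the parity constraint pins down whether an error occurred, hence the true value — processing positions left to right so that each prefix is fully determined before the next coordinate is resolved. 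The remaining positions are already correct. Finally apply the Lehmer-to-permutation bijection (on the length-$n$ value-reduced sequence) to obtain $\sigma$, then undo the shift-and-insert to confirm $\mathcal{E}(\sigma)$ was what was sent. The redundancy bound $t'\ge\frac{t\log(n-m)}{\log n}$ follows since $t$ erasures in a binary RS-type code cost $\le t\log(n-m)$ parity bits (we only need to protect coordinates indexed by the $n-m$ positions that can carry errors, or a coarser $\le t\log n$ bound suffices), and packing into radix-$n$ symbols costs $\lceil\,\cdot\,/\log n\rceil$ of them. I expect the main obstacle to be the careful bookkeeping in Claim 2 when several stuck-at errors occur at positions with nearby values — ensuring no cascade of parity flips escapes the set $\{i_1,\ldots,i_t\}$ — and the left-to-right argument in Claim 3 that the prefix needed to recompute each Lehmer coordinate is itself already corrected.
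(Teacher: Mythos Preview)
Your overall architecture matches the paper's exactly --- Lehmer encoding, Reed--Solomon protection of the mod-$2$ Lehmer vector, and embedding the redundancy via the positions of the low-value symbols --- but the analysis in Claim~2 contains a concrete error that breaks the argument as written.

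You assert that when $\sigma(i)$ drops to $\sigma(i)-1$, the Lehmer coordinate $L(\sigma)(i)$ changes. It does not. For an earlier position $j$ (not itself an error), the comparison before is $\sigma(j)>\sigma(i)$ and after is $\sigma(j)>\sigma(i)-1$; over integers these differ only when $\sigma(j)=\sigma(i)$, which is impossible in a permutation. Your claimed case $\sigma(j)=\sigma(i)-1$ gives ``not counted'' both before ($\sigma(i)-1<\sigma(i)$) and after ($\sigma(i)-1\not>\sigma(i)-1$). So $L(\sigma_e)(i)=L(\sigma)(i)$. The change actually occurs at the \emph{partner} position $i'$ with $\sigma(i')=\sigma(i)-1$, and only when $i'>i$: for such $i'$, the earlier index $i$ contributed to $L(\sigma)(i')$ (since $\sigma(i)>\sigma(i')$) but no longer contributes to $L(\sigma_e)(i')$ (since $\sigma_e(i)=\sigma(i')$). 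Your sentence about later positions $k$ is likewise off by one: the comparison at $k$ flips exactly when $\sigma(k)=\sigma(i)-1$, not when $\sigma(k)=\sigma(i)$. This is the content of the paper's Lemma~\ref{lem:recoverfromlehmer}. The upshot is that the erasure positions fed to the RS decoder are $\max\{i_j,i'_j\}$ for each repeated pair, not the error positions themselves, and the resolution rule in your Claim~3 must be rewritten accordingly: the parity at $\min\{i_j,i'_j\}$ carries no information, so left-to-right processing cannot resolve the first member of a pair when it is encountered.

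You also flag, but do not handle, the case of errors at consecutive values (e.g.\ both $\sigma^{-1}(v)$ and $\sigma^{-1}(v{+}1)$ stuck). There the repeated/missing structure degenerates into chains, and the paper inserts a separate preprocessing step (Lemma~\ref{lem:kstuckerror}) that converts any $\sigma_e$ satisfying~\eqref{eq:kstuckerror} into an estimate $\hat\sigma$ in which the surviving errors are value-separated by at least $2$, so that Lemma~\ref{lem:recoverfromlehmer} applies cleanly. Without this reduction, your Claim~2 bookkeeping does not close even after fixing the Lehmer computation.
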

\begin{remark}
There are $\binom{n-m-t-1}{t}$ choices for the locations of $t$ stuck-at errors in~\eqref{eq:kstuckerror}, all resulting in different erroneous permutations. By the sphere packing bound, the redundancy of a stuc-at error-correcting code is at least $\log \binom{n-m-t-1}{t}=O(t\log (n-m))$. 
\end{remark}
Before presenting the code construction, we first give a formal definition of Lehmer codes. 
For any sequence $\pi\in [n]^n$, its Lehmer encoding $\mathcal{L}(\pi)\in\{0\}\times[1]\times[2]\ldots\times[n-1]$ equals
\begin{align}\label{eq:lehmer}
\mathcal{L}(\pi)(i)=|\{j:j<i,\pi(j)>\pi(i)\}|.
\end{align}
Note that $\pi$ 
is not necessarily a permutation. The following Lemma shows how stuck-at errors in $\sigma$ affect $\mathcal{L}(\sigma)$.
\begin{lemma}\label{lem:recoverfromlehmer}
Let $\sigma_e$ be an erroneous version of $\sigma$ such that
\begin{align}\label{eq:erroneoussigma}
 \sigma_e(i) = \begin{cases}
     \sigma(i)-1,&\textup{for $i\in[n]$ such that $i\in\{i_1,\ldots,i_\ell\}$, $\sigma(i)>m$, and},\\
     &\textup{$\sigma(i_j)\le \sigma(i_{j+1})-2$ for $j\in[\ell-1]$},\\
     \sigma(i),&\textup{for $i\in[n]\backslash\{i_1,\ldots,i_\ell\}$,}
 \end{cases}   
\end{align}
for $\ell\le t$. Moreover, $\sigma_e$ has two repeated symbol values $\sigma_e(i_j)=\sigma_e(i'_j)=\sigma(i_j)-1$ for $j\in[\ell]$. Then,
\begin{align}\label{eq:recoverfromlehmer}
\mathcal{L}(\sigma_e)(i)=\begin{cases}
    \mathcal{L}(\sigma)(i)-1,&\textup{if $i=i'_j$ and $ i'_j>i_j$ for some $j\in[\ell]$,}\\
    \mathcal{L}(\sigma)(i),&\textup{otherwise.}
\end{cases}   
\end{align}
\end{lemma}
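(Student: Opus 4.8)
The plan is to analyze directly how the Lehmer coordinate $\mathcal{L}(\pi)(i) = |\{j : j < i, \pi(j) > \pi(i)\}|$ changes when we pass from $\sigma$ to $\sigma_e$, where $\sigma_e$ differs from $\sigma$ by decrementing the entries at positions $i_1, \ldots, i_\ell$ each by exactly $1$, subject to the stated separation hypothesis $\sigma(i_j) \le \sigma(i_{j+1}) - 2$ (so the decremented values $\sigma(i_j)-1$ are all distinct, and by hypothesis $\sigma_e(i_j) = \sigma_e(i'_j)$ for a unique ``partner'' position $i'_j$ carrying value $\sigma(i_j) - 1$ already in $\sigma$, since $\sigma$ is a permutation). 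The key structural observation I would establish first: because the values $\{\sigma(i_1)-1, \sigma(i_2)-1, \ldots, \sigma(i_\ell)-1\}$ together with $\{\sigma(i_1), \ldots, \sigma(i_\ell)\}$ are pairwise distinct (this is exactly what the separation condition $\sigma(i_j) \le \sigma(i_{j+1})-2$ buys us), each decrement $\sigma(i_j) \mapsto \sigma(i_j)-1$ is ``local'': it only ever swaps the relative order of the pair $(i_j, i'_j)$ and leaves the relative order of every other pair of positions unchanged.

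Concretely, I would fix a position $i$ and a position $j < i$ and ask how the indicator $[\sigma(j) > \sigma(i)]$ compares to $[\sigma_e(j) > \sigma_e(i)]$. Split into cases according to whether $i$ and/or $j$ are among the modified positions. If neither $\sigma(i)$ nor $\sigma(j)$ was changed, the indicator is unchanged. If exactly one of them, say $\sigma(i)$, was decremented (so $i = i_k$ for some $k$), then $[\sigma_e(j) > \sigma_e(i)]$ can differ from $[\sigma(j) > \sigma(i)]$ only when $\sigma(j)$ sits strictly between $\sigma(i)-1$ and $\sigma(i)$, i.e.\ never for a permutation unless $\sigma(j) = \sigma(i) - 1$; but $\sigma(j) = \sigma(i_k)-1$ forces $j = i'_k$, the partner. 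A symmetric argument handles $j = i_k$. If both $i$ and $j$ are modified, say $i = i_k$ and $j = i_{k'}$, then since both values drop by exactly $1$ their relative order is preserved, so the indicator is unchanged. Assembling: the only pair $(j,i)$ with $j<i$ whose contribution to a Lehmer coordinate changes is a pair $\{i_k, i'_k\}$, and it changes the coordinate indexed by the larger of the two positions. When $i'_k > i_k$, the coordinate $\mathcal{L}(\cdot)(i'_k)$ loses the contribution of position $i_k$: before the error $\sigma(i_k) > \sigma(i'_k) = \sigma(i_k)-1$ contributed $1$, after the error $\sigma_e(i_k) = \sigma_e(i'_k)$ so the contribution is $0$, giving $\mathcal{L}(\sigma_e)(i'_k) = \mathcal{L}(\sigma)(i'_k) - 1$. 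When instead $i_k > i'_k$, one checks the same pair contributes $0$ both before (since $\sigma(i'_k) = \sigma(i_k)-1 < \sigma(i_k)$) and after (since $\sigma_e(i_k) < \sigma_e(i'_k)$? no — they are equal; $\sigma_e(i'_k) = \sigma(i_k) - 1 = \sigma_e(i_k)$, which also gives $0$), so that coordinate is unchanged, consistent with the claimed formula which only modifies coordinates at positions $i'_j$ with $i'_j > i_j$. All other coordinates of $\mathcal{L}(\sigma_e)$ agree with those of $\mathcal{L}(\sigma)$, which is precisely~\eqref{eq:recoverfromlehmer}.

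The main obstacle I anticipate is the careful bookkeeping in the ``exactly one modified'' case: one must be sure that decrementing $\sigma(i_k)$ cannot accidentally flip the order against some third position whose value happens to equal $\sigma(i_k) - 1$ \emph{in $\sigma_e$} rather than in $\sigma$ — i.e.\ the value $\sigma(i_k) - 1$ could itself be the image of another modified position $i_{k'}$ with $\sigma(i_{k'}) = \sigma(i_k) - 1 + 1 = \sigma(i_k)$, which is impossible since $\sigma$ is injective, or $\sigma(i_{k'}) - 1 = \sigma(i_k) - 1$, i.e.\ $\sigma(i_{k'}) = \sigma(i_k)$, again impossible. The separation hypothesis $\sigma(i_j) \le \sigma(i_{j+1}) - 2$ is exactly the technical device that guarantees the partner $i'_k$ is unique and is \emph{not} itself one of the modified positions, so that ``who is the unique position $j$ with $\sigma_e$-value colliding with $\sigma_e(i_k)$'' is well-defined; I would state this cleanly as a preliminary claim before doing the case analysis. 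Once that uniqueness is pinned down, the rest is the routine four-case check sketched above, and the formula falls out by summing the per-pair changes.
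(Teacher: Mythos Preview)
Your proposal is correct and follows essentially the same approach as the paper: both arguments establish that the strict-order indicator $[\sigma(j)>\sigma(i)]$ agrees with $[\sigma_e(j)>\sigma_e(i)]$ for every pair $j<i$ except the partner pairs $\{i_k,i'_k\}$, the paper packaging this as a single biconditional proved via two short inequality chains rather than your four-way case split on which positions are modified. One small slip to clean up: in your sub-case ``$i=i_k$ modified, $j$ not,'' the indicator in fact \emph{never} changes (a flip would require $\sigma(j)=\sigma(i)$, impossible in a permutation), not ``only when $\sigma(j)=\sigma(i_k)-1$''; the genuine change lives entirely in the symmetric sub-case $j=i_k$, $i=i'_k$, which you compute correctly in your assembly paragraph.
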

\begin{proof}
We show that for any $i,i'\in [n]$ and $i<i'$, we have $\sigma_e(i)>\sigma_e(i'
)$ if and only if $\sigma(i)>\sigma(i'
)$, unless $\sigma_e(i)=\sigma_e(i')$ and $i=i_j=\min\{i_j,i'_j\}$ for some $j\in[\ell]$. Suppose we have either $\sigma_e(i)>\sigma_e(i'
)$ and $\sigma(i)\le \sigma(i'
)$ or $\sigma_e(i)\le\sigma_e(i'
)$ and $\sigma(i)> \sigma(i'
)$. If $\sigma_e(i)>\sigma_e(i'
)$ and $\sigma(i)\le \sigma(i'
)$, then $\sigma(i')-1\ge\sigma(i)\ge \sigma_e(i)> \sigma_e(i'
)\ge \sigma(i')-1,$
which is a contradiction. On the other hand, if $\sigma_e(i)\le\sigma_e(i'
)$ and $\sigma(i)> \sigma(i'
)$, we have $\sigma(i)> \sigma(i'
)\ge \sigma_e(i'
)\ge \sigma_e(i)\ge \sigma(i)-1.$ Hence, $\sigma_e(i)=\sigma_e(i')$, $i=i_j=\min\{i_j,i'_j\}$, and $i'=i'_j$ for some $j\in[\ell]$. Therefore, $\mathcal{L}(\sigma_e)(i')=\mathcal{L}(\sigma)(i')-1$ if and only if $i'=i'_j$ and $i'_j>i_j$ for some $j\in[\ell]$. 
\end{proof}
The following lemma shows that for any $\sigma_e$ satisfying~\eqref{eq:kstuckerror}, we can give an estimate $\hat{\sigma}$ of $\sigma$ based on $\sigma_e$ that satisfies \eqref{eq:erroneoussigma}. 
\begin{lemma}\label{lem:kstuckerror}
For any $\sigma_e$ be given by~\eqref{eq:kstuckerror}, one can obtain an estimate $\hat{\sigma}$ of $\sigma$ that satisfies~\eqref{eq:erroneoussigma}.
\end{lemma}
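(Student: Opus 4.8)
The plan is to construct $\hat\sigma$ explicitly from $\sigma_e$ by partially undoing the stuck-at errors, so that the errors that remain between $\hat\sigma$ and $\sigma_e$ lie at pairwise well-separated values, matching the separation condition $\sigma(i_j)\le\sigma(i_{j+1})-2$ in~\eqref{eq:erroneoussigma}.

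First I would pin down the footprint that the $t$ errors of~\eqref{eq:kstuckerror} leave on $\sigma_e$. Let $V=\{\sigma(i_1),\ldots,\sigma(i_t)\}\subseteq\{m+1,\ldots,n\}$ be the set of original values at the stuck positions; since $\sigma$ is a permutation, $|V|=t$, and decrementing these symbols turns the value multiset of $\sigma_e$ from $[n]$ into $([n]\setminus V)\uplus\{v-1:v\in V\}$. Thus a value $w\in[n]$ occurs in $\sigma_e$ with multiplicity $2$ when $w+1\in V$ and $w\notin V$, multiplicity $0$ when $w\in V$ and $w+1\notin V$, and multiplicity $1$ otherwise. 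Decomposing $V$ into its maximal runs of consecutive integers, each run $[a,b]\subseteq V$ then accounts for exactly one doubled value $a-1$, one missing value $b$, and a chain of unit-multiplicity values $a,\ldots,b-1$; moreover two distinct runs $[a,b]$ and $[a',b']$ with $b<a'$ satisfy $b\le a'-2$ (maximality forces $b+1\notin V$ and $a'-1\notin V$), so the intervals $[a-1,b]$ over all runs are pairwise disjoint and increasing. Consequently the set $D$ of doubled values and the set $M$ of missing values of $\sigma_e$ are disjoint, have common size $\ell\le t$ equal to the number of runs, and, writing $D=\{d_1<\cdots<d_\ell\}$ and $M=\{e_1<\cdots<e_\ell\}$, they interleave as $d_1<e_1<d_2<e_2<\cdots$; pairing entrywise recovers the $k$-th run as $[a_k,b_k]:=[d_k+1,e_k]$. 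Everything here is computable from $\sigma_e$ alone.

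Next I would set $\hat\sigma:=\sigma_e$ and, for each $k\in[\ell]$, choose one of the two positions of $\sigma_e$ carrying the doubled value $a_k-1$, call it $i_k$, redefine $\hat\sigma(i_k):=a_k$, and denote the other such position by $i'_k$; positions outside $\{i_1,\ldots,i_\ell\}$ retain their $\sigma_e$-values. It remains to verify that $\sigma_e$ is an erroneous version of this $\hat\sigma$ in the sense of~\eqref{eq:erroneoussigma}: (i) $\hat\sigma$ and $\sigma_e$ agree off $\{i_1,\ldots,i_\ell\}$ by construction; (ii) $\sigma_e(i_k)=a_k-1=\hat\sigma(i_k)-1$, while $\hat\sigma(i_k)=a_k\in V\subseteq\{m+1,\ldots,n\}$ gives $\hat\sigma(i_k)>m$; (iii) by our indexing $\hat\sigma(i_1)<\cdots<\hat\sigma(i_\ell)$, and $\hat\sigma(i_k)=a_k\le b_k\le a_{k+1}-2=\hat\sigma(i_{k+1})-2$, the required separation; (iv) the value $a_k-1$ still occurs at both $i_k$ and $i'_k$ in $\sigma_e$, so $\sigma_e(i_k)=\sigma_e(i'_k)=\hat\sigma(i_k)-1$. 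As $\ell\le t$, this is exactly an instance of~\eqref{eq:erroneoussigma}.

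The step demanding care---and what I view as the crux---is the structural analysis above: that the errors of~\eqref{eq:kstuckerror}, even when $V$ contains consecutive values so the errors ``chain'' in $\sigma_e$, always decompose into runs whose doubled/missing signatures are disjoint and recoverable from $\sigma_e$, and that nudging each run's doubled value up by one leaves precisely a value-separated family of single errors. The rest is bookkeeping; note that $\hat\sigma$ need not be a true permutation (for a run of length $\ge 2$ it is only a multiset permutation, with the doubled value shifted up and the missing value intact), but this is harmless because~\eqref{eq:erroneoussigma} and Lemma~\ref{lem:recoverfromlehmer} do not require it, and the residual ambiguity---which of $i_k,i'_k$ is the truly stuck position---is exactly what the Lehmer/Reed-Solomon redundancy of Theorem~\ref{thm:kstuckerror} resolves afterward.
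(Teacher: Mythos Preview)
Your construction does not prove the lemma as stated, because you have the roles in~\eqref{eq:erroneoussigma} reversed. The lemma asks for an estimate $\hat\sigma$, computable from $\sigma_e$, such that \emph{$\hat\sigma$ is an erroneous version of the true permutation $\sigma$} in the sense of~\eqref{eq:erroneoussigma} (i.e., $\hat\sigma$ plays the role of $\sigma_e$ there and the unknown $\sigma$ plays the role of $\sigma$). This is precisely what is needed to feed $\hat\sigma$ into Lemma~\ref{lem:recoverfromlehmer} and conclude that $\mathcal{B}(\hat\sigma)$ and $\mathcal{B}(\sigma)$ differ in at most $t$ locatable bits. You instead verify that \emph{$\sigma_e$ is an erroneous version of $\hat\sigma$} via~\eqref{eq:erroneoussigma}; that statement is true but useless, since we already have $\sigma_e$ and we need to get to $\sigma$.

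Concretely, take a run of length three, $V\supseteq\{a,a+1,a+2\}$, with $\sigma$ taking values $a-1,a,a+1,a+2$ at positions $p_0,p_1,p_2,p_3$. Then $\sigma_e$ carries $a-1,a-1,a,a+1$ at those positions. Your $\hat\sigma$ bumps one of the doubled $a-1$'s up to $a$ and leaves the rest, yielding (in the best case, choosing $p_1$) the values $a-1,a,a,a+1$. Comparing to $\sigma$, this has $\hat\sigma(p_2)=\sigma(p_2)-1$ and $\hat\sigma(p_3)=\sigma(p_3)-1$, two decrements at \emph{adjacent} values $a+1,a+2$; the separation condition $\sigma(i_j)\le\sigma(i_{j+1})-2$ in~\eqref{eq:erroneoussigma} fails. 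With the other choice ($p_0$) you even get $\hat\sigma(p_0)=\sigma(p_0)+1$, which is not a decrement at all. So for any run of length $\ge 2$ (and even length $1$ with the unlucky choice), your $\hat\sigma$ does not satisfy~\eqref{eq:erroneoussigma} relative to $\sigma$, and Lemma~\ref{lem:recoverfromlehmer} cannot be applied to it.

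The fix, and what the paper does, is the opposite move within each run $[a_k,b_k]$: leave the doubled value $a_k-1$ alone and instead increment by one every symbol of $\sigma_e$ whose value lies in $\{a_k,\ldots,b_k-1\}$. This deterministically restores all positions in the run except the single position that truly carried value $a_k$, so $\hat\sigma$ agrees with $\sigma$ everywhere except at one position per run, where $\hat\sigma=\sigma-1$ with value $a_k$; since consecutive runs satisfy $a_k\le a_{k+1}-2$, the separation condition holds. Your structural analysis of runs, doubled values, and missing values is correct and is exactly the bookkeeping the paper uses; only the final ``bump'' step needs to be replaced as above.
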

\begin{proof}
Let $\sigma_e$ be obtained from $\sigma$ after stuck-at errors at symbols whose values belong to the union of disjoint intervals $\cup^{L}_{\ell=1}\{i'_\ell+1,\ldots,i'_{\ell}+j_\ell\}$ such that $\sum^{L}_{\ell=1}j_\ell\le t$ and that $i'_{\ell}+j_\ell+1<i'_{\ell+1}$. Then, for each $\ell\in[L]$, there are two symbols with repeated values $i'_\ell$ in $\sigma_e$, one of which comes from the symbol in $\sigma$ with value $i'_\ell+1$. Moreover, the symbols with values $i'_\ell+1,\ldots,i'_{\ell}+j_\ell-1$ in $\sigma_e$ arise from symbols in $\sigma$ with values $i'_\ell+2,\ldots,i'_{\ell}+j_\ell$, respectively. The symbol with value $i'_{\ell}+j_\ell$ does not appear in $\sigma_e$. 

To obtain $\hat{\sigma}$ from $\sigma_e$, we find the missing values in $\sigma_e$, which coincide with the values $i'_{\ell}+j_\ell$ for $\ell\in[L]$. Then, for each missing value $i'_{\ell}+j_\ell$ we find the largest repeated value in $\sigma_e$ that is smaller than $i'_{\ell}+j_\ell$, and this coincides with $i'_\ell$. 
Let 
\begin{align*}
    \hat{\sigma}(i) = \begin{cases}
        \sigma_e(i)+1,&\text{if }\sigma_e(i)\in\cup^L_{\ell=1}\{i'_\ell+1,\ldots,i'_{\ell}+j_\ell-1\},\\
        \sigma_e(i),&\text{else}.
    \end{cases}.
\end{align*}
Note that the values $i'_\ell$ and $j_\ell$, $\ell\in[L]$ can be inferred from $\sigma_e$ as described above. 
Then, 
\begin{align}
    \hat{\sigma}(i) = \begin{cases}
        \sigma(i)-1,&\text{if }\sigma(i) \in \cup^L_{\ell=1}\{i'_\ell+1\},\\
        \sigma(i),&\text{else}
    \end{cases}.
\end{align}
Moreover, we have that $i'_\ell+2\le i'_{\ell+1}$ by definition of $i'_\ell$. Hence $\hat{\sigma}$ satisifies \eqref{eq:erroneoussigma}.
\end{proof}
According to Lemma~\ref{lem:kstuckerror}, one can reduce the problem of recovering $\sigma$ from $\sigma_e$ satisfying~\eqref{eq:kstuckerror} to that of recovering $\sigma$ from $\sigma_e$ satisfying~\eqref{eq:erroneoussigma}. Furthermore, 
based on Lemma~\ref{lem:recoverfromlehmer}, we will consider the modulo $2$ reduction of $\mathcal{L}(\sigma)$, and only focus on symbols with values larger than $m$, i.e., 
$$\mathcal{B}(\sigma)=(\mathcal{L}(\sigma)(i)\bmod 2:\sigma(i)>m),$$
for $i\in [n]$. Lemma~\ref{lem:recoverfromlehmer} shows when $\sigma_e$ satisfies~\eqref{eq:erroneoussigma}, $\mathcal{B}(\sigma_e)$ changes in at most $t$ positions $i$, where $i=i'_j$ and $i'
_j>i_j$ for some $j\in[\ell]$. Hence, $t$ stuck-at errors result in at most $t$ substitutions in $\mathcal{B}(\sigma)$, the positions of which can be inferred. Moreover, no errors occur in $\mathcal{L}(\sigma)(i)$ for $\sigma(i)\le m$.

To protect $\mathcal{B}(\sigma)$ from $t$ erasures, we use Reed-Solomon codes. Specifically, we encode a binary sequence $\boldsymbol{x}\in\{0,1\}^\ell$ of length $\ell$ into a sequence over an alphabet of size $q$ by first splitting $\boldsymbol{x}$ into blocks $\boldsymbol{x}_i$, $i\in[\frac{\ell}{\log q}],$ of length $\log q,$ where each block is represents by a symbol from the alphabet of size $q$ of the Reed-Solomon code. Let $RS_t(\boldsymbol{x}):\{0,1\}^{\ell}\rightarrow [q]^t$ be a mapping such that $(\boldsymbol{x}_1,\ldots,\boldsymbol{x}_{\frac{\ell}{\log q}},RS_t(\boldsymbol{x}))$ is a Reed-Solomon code capable of correcting $t$ symbol erasures. It is required that $q\ge t+\frac{\ell}{\log q}+1$. We let $q=n$ and $\ell= n-m$. Note that $q\ge t+\frac{\ell}{\log q}+1$ is satisfied when $n>4$ and $t<n$. 

As mentioned in the illustrating example, one needs to encode $RS_t(\mathcal{B}(\sigma))$ in permutations. To this end, we use the fact that permutations of length $n$ are over the  alphabet $[n]$ and use redundant symbols to encode $RS_t(\mathcal{B}(\sigma))$. 
We use the symbols with values in $[t']$ to encode $RS_t(\mathcal{B}(\sigma))$. Note that under the assumption $m\ge t'+2$, the symbols with values in $[t']$ can still be identified/recognized after $t$ stuck-at errors. 
Moreover, we encode the Reed-Solomon redundancy $RS_t(\mathcal{B}(\sigma))$ using positional information rather than the actual values of the redundant symbols. As a result, the original permutation $\sigma$ is encoded using symbols with values in  $[n+t']\backslash[t']$. The details of the encoding  
procedure are as follows.  

\textbf{Encoding:}
\begin{itemize}
    \item[\textbf{(1)}] Given a permutation $\sigma\in\mathcal{S}_n$, compute the redundancy $RS_t(\mathcal{B}(\sigma))$ and represent it by $t'$ symbols $(r_1,\ldots,r_{t'})$ over the alphabet $[n]$. 
    \item[\textbf{(2)}] Compute $\mathcal{F}(\sigma)$ by $\mathcal{F}(\sigma)(i)= \sigma(i)+t'$ for $i\in[n]$.  
    \item[\textbf{(3)}] Insert $i\in[t']$, right after the $r_i$th symbol $\sigma(r_i)$ in $\sigma$. If $r_i=r_j$ for $i<j\in [t']$, insert $j$ after $i$ where $i$ and $j$ are between the $r_i$th symbol and the $r_i+1$th symbol in $\mathcal{F}(\sigma)$.
\end{itemize}
Let $\mathcal{E}(\sigma)\in\mathcal{S}_{n+t'}$ be the output of the encoding algorithm. Note that $\sigma$ is encoded in the symbols of values $[n+t']\backslash[t']$ in $\mathcal{E}(\sigma)$. 
The decoding procedure works as follows.

\textbf{Decoding:}
\begin{itemize}
    \item[\textbf{(1)}] Given an erroneous permutation of $\mathcal{E}(\sigma)$, compute an estimate $\hat{\mathcal{E}}(\sigma)$ of $\mathcal{E}(\sigma)$ according to~Lemma~\ref{lem:kstuckerror}. 
    \item[\textbf{(2)}] 
    Let $r_i=|\{j:j<\ell,\hat{\mathcal{E}}(j)\in [n+t']\backslash[t'],\hat{\mathcal{E}}(\ell)=i\}|$ be the number of symbols in $\hat{\mathcal{E}}$ that precede the symbol $i$ and have values in $[n+t']\backslash[t']$. 
    \item[\textbf{(3)}] Let $\hat{\mathcal{F}}(\sigma)$ be an estimate of $\mathcal{F}(\sigma)$ obtained from $\hat{\mathcal{E}}$ by removing symbols with values in $[t']$ and subtracting $t'$ from each entry. 
    Compute $\mathcal{B}(\hat{\mathcal{F}}(\sigma))$ and determine the erasure positions based on Lemma~\ref{lem:recoverfromlehmer}. Then use $(r_1,\ldots,r_{t'})$ as Reed-Solomon redundancy to correct erasures in $\mathcal{B}(\hat{\mathcal{F}}(\sigma))$ and obtain $\mathcal{B}(\sigma)$.
    \item[\textbf{(4)}] Recover $\sigma$ from $\hat{\mathcal{F}}(\sigma)$, $\mathcal{B}(\hat{\mathcal{F}}(\sigma))$, and  $\mathcal{B}(\sigma)$, based on Lemma~\ref{lem:recoverfromlehmer} as follows. Let $\hat{\mathcal{F}}(\sigma)(i'_j)=\hat{\mathcal{F}}(\sigma)(i_j)$, $j\in[\ell],$ be the $\ell$ pairs of repeated symbols in $\hat{\mathcal{F}}(\sigma)$. For each $j\in[\ell]$, if $\mathcal{B}(\hat{\mathcal{F}}(\sigma))(i_j)=\mathcal{B}(\sigma)(i_j)$ and $\mathcal{B}(\hat{\mathcal{F}}(\sigma))(i'_j)=\mathcal{B}(\sigma)(i'_j)$, then let $\hat{\mathcal{F}}(\sigma)(\min\{i_j,i'_j\})=\hat{\mathcal{F}}(\sigma)(i_j)+1$. Otherwise, let $\hat{\mathcal{F}}(\sigma)(\max\{i_j,i'_j\})=\hat{\mathcal{F}}(\sigma)(i_j)+1$.
    \item[\textbf{(5)}] 
    Output $\hat{\mathcal{F}}(\sigma)$, the estimate of $\sigma$.
\end{itemize} 
We next prove the correctness of the decoding procedure. Note that by assumption, $m\ge t'+2$ and hence the symbols $1,\ldots,t'$ are not affected by errors and hence $(r_1,\ldots,r_{t'})=RS_t(\mathcal{B}(\sigma))$ is correctly decoded. Moreover, $\hat{\mathcal{F}}(\sigma)$ is an erroneous version of $\sigma$ satisfying~\eqref{eq:erroneoussigma}. Hence, by Lemma~\ref{lem:recoverfromlehmer}, $\mathcal{B}(\hat{\mathcal{F}(\sigma)})$ differs from $\mathcal{B}(\sigma)$ in at most $t$ bits, the positions of which can be determined. Then, $\mathcal{B}(\sigma)$ can be recovered with the help of the Reed-Solomon code redundancy $(r_1,\ldots,r_{t'})$. According to Lemma~\ref{lem:recoverfromlehmer}, for each $i\in[n]$ where $\mathcal{B}(\hat{\mathcal{F}(\sigma)})(i)$ and $\mathcal{B}(\sigma)(i)$ differ, we have $\mathcal{L}(\hat{\mathcal{F}}(\sigma))(i)=\mathcal{L}(\sigma)(i)-1$. For other values of $i$ we have $\mathcal{L}(\hat{\mathcal{F}}(\sigma))(i)=\mathcal{L}(\sigma)(i)$. Hence, according to Lemma~\ref{lem:recoverfromlehmer}, the estimate $\hat{\mathcal{F}}(\sigma)$ in Step (4) of decoding equals $\sigma$. 
\subsection{The burst stuck-at error model}\label{sec:burststuckerror}
We now provide code constructions for cases when symbols with at most $t$ consecutive values get stuck, which is described by~\eqref{eq:consecutivestuck}. Suppose data is encoded into a permutation $\sigma=(9,1,4,2,5,14,10,3,6,13,11,$ $7,12,8,15)$ of length $15$ and at most $t=2$ stuck-at errors occur at symbols with values larger than $m=3$. We group symbol values $\{1,\ldots,15\}$ into blocks of length $2t=4$, i.e., $\{1,2,3,4\},\{5,6,7,8\},\{9,10,11,12\}$, and $\{13,14,15\}$ (the last block may have fewer than $2t=4$ symbols). For each block of values $(j,j+1,j+2,j+3)$, we look at the relative positions of symbols with these values in $\sigma$ and obtain a permutation $\sigma_{j}$ of length $4$ such that $\sigma^{(-1)}_{j}(i_1)>\sigma^{(-1)}_{j}(i_2)$ if $\sigma^{-1}(j+i_1-1)>\sigma^{-1}(j+i_2-1)$. For block $\{1,2,3,4\}$, the relative ranking is given by $(1,4,2,3),$ since this is the order of symbols $1,2,3$, and $4$ in $\sigma$. Similarly, the blocks $\{5,6,7,8\},\{9,10,11,12\}$ and $\{13,14,15\}$ result in the relative rankings $(1,2,3,4),(1,2,3,4)$ and $\{2,1,3\}$, respectively. In addition to the blocks obtained by grouping values in $[15]$, we create another set of blocks that shifts the values of the first set of blocks by $t$. More specifically, we group $\{1+t=3,\ldots,15\}$ into another set of blocks of length $2t=4$, and compute the relative ranking of the blocks as $\{3,4,5,6\},\{7,8,9,10\},\{11,12,$ $13,14\},$ and $\{15\}$ and obtain $(2,3,1,4),(3,4,1,2),(4,3,1,2)$, and $(1)$, respectively.  
Note that $t=2$ stuck-at errors obfuscate exactly one block in at least one of the two sets of blocks, the identity of which can be determined. 
Hence, it suffices to protect from a single erasure of the relative ranking of a single block in both sets of blocks. To this end, we compute the symbol-wise sum of block relative rankings in both sets of blocks, respectively, modulo $2t=4$, while padding with zeros all rankings shorter than $4$. 
Then, it remains to encode the modulo sums into a permutation $\sigma$. 

Similar to Section \ref{sec:kstuckerrors}, 
we use the positional information of redundant symbols for encoding. Different from Section~\ref{sec:kstuckerrors}, where it is assumed that the redundant symbols are at most $m$ and do not suffer from errors, here we consider the case when $m$ can be small such that redundant symbols also suffer from stuck-at errors.
To avoid a stuck-at error affecting multiple redundant symbols, we interleave the values of symbols that encode $\sigma$ and the values of the redundant symbols such that we use the values $6,9,12,15,18$, and $21$ with difference $t+1=3$ for redundant symbols and encode $\sigma$ in the remaining values $\{1,2,3,4,5,7,8,10,11,13,14,16,17,19,20\}$, for the case of our running example. 
Moreover, we use an extra redundant symbol to protect the symbols that encode redundancy.

The details are given in the proof of the following theorem, which shows that it suffices to use at most $\frac{4t\log t}{\log n}+1$ redundant symbols to correct a burst of at most $t$ stuck-at errors.
\begin{theorem}\label{thm:burststuckerror}
For any message given in the form of a permutation $\sigma$ of length $n\ge 2t(t+1)$, there is an encoding mapping $\mathcal{E}_b:\mathcal{S}_n\rightarrow\mathcal{S}_{n+t'+1}$ that maps $\sigma$ to a permutation $\mathcal{E}_b(\sigma)$ with length $n+t'+1$ such that $t'\log n \ge 4t\log t$. Moreover, $\mathcal{E}_b(\sigma)$ can be corrected from at most $t$ stuck-at symbol errors described in~\eqref{eq:consecutivestuck}.    
\end{theorem}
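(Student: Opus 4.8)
The plan is to follow the recipe sketched in the paragraphs preceding the theorem, making each informal step precise and then bounding the redundancy. First I would set up the two interleaved partitions of the value set. Let $B_{\ell}^{(0)} = \{2t(\ell-1)+1, \ldots, 2t\ell\}$ for $\ell = 1, \ldots, \lceil n/(2t)\rceil$ be the first set of blocks of consecutive symbol values (the last block possibly truncated), and let $B_{\ell}^{(1)}$ be the analogous blocks after shifting all values by $t$, i.e.\ partitioning $\{t+1, \ldots, n\}$ (with values $\le t$ left aside — they are below $m$-relevant range, or handled as a short leading block). For each block $B_{\ell}^{(s)} = \{j, j+1, \ldots, j+k-1\}$ with $k \le 2t$, define $\sigma_{\ell}^{(s)} \in \mathcal{S}_k$ to be the permutation recording the relative order in which the symbols with these values appear in $\sigma$, exactly as in the running example (equivalently, the pattern of $\sigma^{-1}$ restricted to the coordinates $j, \ldots, j+k-1$). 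The first key claim is the combinatorial one already asserted: any burst of at most $t$ stuck-at errors as in~\eqref{eq:consecutivestuck} affects a set of at most $t$ consecutive values $\{j, \ldots, j+t_1-1\}$, and any such window of $\le t$ consecutive integers lies entirely inside a single block of $B^{(0)}$ or entirely inside a single block of $B^{(1)}$; moreover one can read off from $\sigma_e$ which values are involved (they are the repeated/missing values) and hence identify the unique affected block in each partition. Proving this is just the observation that a window of length $t$ within $\{1,\ldots,n\}$ that straddles a block boundary of $B^{(0)}$ is contained in one block of $B^{(1)}$ because the two partitions are offset by exactly $t$, together with the fact that within the unaffected partition the relative ranking $\sigma_{\ell}^{(s)}$ of every block is still perfectly recoverable (stuck-at errors only move values downward within the erased window, so symbols outside the window keep correct values and correct relative order).

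Next I would reduce error correction of the permutation to a single symbol-erasure problem over $\mathbb{Z}_{2t}$. For each $s \in \{0,1\}$, form the componentwise sum modulo $2t$ of all the length-$(\le 2t)$ vectors $\sigma_{\ell}^{(s)}$ (zero-padding the short final block to length $2t$); call these two checksum vectors $c^{(s)} \in \mathbb{Z}_{2t}^{2t}$. By the claim above, after a burst error we know, in each partition, exactly one block index $\ell^{(s)}$ whose ranking vector is corrupted and all other ranking vectors exactly; so $\sigma_{\ell^{(s)}}^{(s)}$ is recoverable as $c^{(s)}$ minus the sum of the known vectors, modulo $2t$. (A short argument is needed that recovering all the $\sigma_{\ell}^{(s)}$ for both $s$ suffices to reconstruct $\sigma$ itself: knowing, for every pair of consecutive values, their relative order — which is what the union of the two block systems provides, since consecutive values always share a block in at least one partition — determines $\sigma^{-1}$ and hence $\sigma$. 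This uses that the offset is $t \ge 1$.) Thus the side information that must be stored is the pair $(c^{(0)}, c^{(1)})$, which is $2 \cdot 2t \cdot \log(2t) = 4t\log(2t)$ bits — and here I would be slightly more careful than the running text and note $4t\log(2t) \le 4t\log t + 4t$, absorbing the $+4t$ by taking $t'$ to be $\lceil 4t\log t/\log n\rceil$ (valid once $n \ge 2t(t+1)$, which also guarantees each partition has at least two blocks so the scheme is non-degenerate).

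Then I would handle the embedding of this redundancy into a permutation, which is where the construction departs from Section~\ref{sec:kstuckerrors} and where the real bookkeeping lies. Pack $(c^{(0)},c^{(1)})$ into $t'$ symbols over $[n]$ with $t'\log n \ge 4t\log t$, and insert these $t'$ redundant symbols into $\sigma$ using positional encoding as in the previous subsection. The new wrinkle, relevant when $m$ is small, is that a stuck-at burst can now corrupt the redundant symbols too; to prevent one burst from damaging more than one redundant symbol we assign the redundant symbols values spaced $t+1$ apart (values $(t+1), 2(t+1), \ldots, t'(t+1)$, say, shifting $\sigma$'s values into the complementary set), so that a window of $\le t$ consecutive values contains at most one redundant value, and an erased redundant value still leaves its positional information recoverable as a single erasure among the known structure. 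Finally, to protect the positional information carrying $(c^{(0)},c^{(1)})$ — which could itself lose one symbol to the burst — I would add one more redundant symbol, a parity check (mod $2t$ or a Reed–Solomon parity over a large enough field) on the $t'$ redundancy-bearing symbols, giving the total length $n + t' + 1$ claimed. The decoder then: (i) locates repeated/missing values in the received word to determine the burst window and the two affected blocks; (ii) recovers the redundant symbols' values/positions, using the extra parity symbol to repair the at most one redundant symbol hit by the burst; (iii) unpacks $(c^{(0)},c^{(1)})$; (iv) recomputes all uncorrupted block rankings and solves for the two corrupted ones modulo $2t$; (v) reassembles $\sigma$ from the full set of consecutive-value orderings; (vi) strips the $t'+1$ redundant symbols.

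The main obstacle I anticipate is step three — the simultaneous, interference-free embedding of the redundancy when $m$ is not large. Getting the value-spacing argument airtight requires checking that after shifting $\sigma$'s values off the arithmetic progression used by the redundant symbols, a burst of $\le t$ consecutive values in the resulting alphabet still behaves as a burst of the original type (so that Lemma-style recovery of positions goes through), that the extra parity symbol is itself identifiable and uncorrupted or repairable, and that the interleaving does not inflate $n$ beyond $n+t'+1$. The combinatorial core (two offset partitions catch any length-$t$ window) is clean; the redundancy accounting ($4t\log(2t)$ bits, hence $t'$ symbols, plus one) is routine; it is the low-$m$ self-protection of the redundancy that will demand the most care.
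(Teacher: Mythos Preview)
Your approach is essentially the paper's: two value-partitions offset by $t$, componentwise modular checksums $R_1,R_2$ (the paper's analogue of your $c^{(0)},c^{(1)}$), redundant symbols with values spaced $t+1$ apart, and one extra parity symbol (the paper uses $\sum r_i \bmod n$) to protect the redundancy against the single burst hit it can suffer.

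One point needs tightening. You correctly state that a window of $\le t$ consecutive values lies entirely inside a block of $B^{(0)}$ \emph{or} of $B^{(1)}$, but then slip into writing ``in each partition, exactly one block index $\ell^{(s)}$ whose ranking vector is corrupted'' and later reconstruct $\sigma$ from ``the full set of consecutive-value orderings'' across both partitions. That does not work in general: if the burst straddles a block boundary of, say, $B^{(0)}$, then \emph{two} blocks of $B^{(0)}$ are corrupted and a single checksum $c^{(0)}$ cannot repair both, so you cannot recover all $\sigma_\ell^{(0)}$. The paper's Lemma~\ref{lem:consecutivestuck} only guarantees that \emph{at least one} of $S_1,S_2$ has a single erased block; the decoder identifies which one from the missing/repeated values in $\sigma_e$ and uses only that partition's checksum. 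Reconstruction then does not need both partitions: once the single erased block $\sigma^{i,1}$ (say) is recovered, the $2t$ positions occupied by its values are already determined (they are the complement of the positions of all other, uncorrupted, values), and $\sigma^{i,1}$ tells you which of the $2t$ values goes in which of those positions. Replace your step~(v) with this direct placement argument and the proof goes through.
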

\begin{remark}
Note that the amount of information needed to distinguish different relative orderings of the stuck symbols is at least $\log t!=O(t\log t)$. Hence, the redundancy of the code is at least $O(t\log t).$
\end{remark}
Before presenting the code construction, we first introduce the notion of projection of a permutation. For a permutation $\sigma$ and a subset of positions $A=\{i_1,\ldots,i_{|A|}\}\subseteq [n]$, $\sigma_A\in \mathcal{S}_{|A|}$ is a permutation of length $|A|$ such that $\sigma_A(j_1)<\sigma_A(j_2)$ if $\sigma(i_{j_1})<\sigma(i_{j_2})$ for $j_1,j_2\in [|A|]$, i.e., $\sigma_A$ is the relative ranking of symbols in $\sigma$ with positions in $A$. For each $i\in[\lceil\frac{n}{2t}\rceil]$, let
\begin{equation}\label{eq:sigmai1i2}
\sigma^{i,1}=\sigma_{\{\sigma^{-1}(2(i-1)t+1),\ldots,\sigma^{-1}(2it)\}}\in\mathcal{S}_t, \;\;\;
\sigma^{i,2}=\sigma_{\{\sigma^{-1}(t+2(i-1)t+1),\ldots,\sigma^{-1}(t+2it)\}}\in\mathcal{S}_t,
\end{equation}
such that $\sigma^{i,1}(j)=0$ when $2(i-1)t+j$ is not in $\sigma$ and $\sigma^{i,2}(j)=0$ when $t+2(i-1)t+j$ is not in $\sigma$.
Consider the following two concatenations of $\sigma^{i,1}$ and $\sigma^{i,2}$, respectively,
\begin{equation}\label{eq:s1s2}
    S_1 = (\sigma^{1,1},\ldots,\sigma^{\lceil\frac{n}{2t}\rceil,1}), \;\;\;
    S_2 = (\sigma^{1,2},\ldots,\sigma^{\lceil\frac{n-t}{2t}\rceil,2}).
\end{equation}
Note that both $S_1$ and $S_2$ are obtained by splitting the values of symbols in $\sigma$ into blocks of length $2t$ and concatenating the projection of $\sigma$ onto the symbols with these blocks of values. Moreover, there is a $t$-symbol shift between the sets of blocks that are used to construct $S_1$ and $S_2$, respectively. The following lemma shows that either $S_1$ or $S_2$ can be identified to have a single block permutation projection erasure in one of $\sigma^{1,1},\ldots,\sigma^{\lceil\frac{n}{2t}\rceil,1}$ or $\sigma^{1,2},\ldots,\sigma^{\lceil\frac{n-t    }{2t}\rceil,2}$, respectively, under the burst stuck-at error model of~\eqref{eq:consecutivestuck}.
\begin{lemma}\label{lem:consecutivestuck}
Declare an erasure of $\sigma^{i,1}$ or $\sigma^{i,2}$ if at least one value among $2(i-1)t+1,\ldots,2it$ or $t+2(i-1)t+1,\ldots,t+2it$ is missing in $\sigma_e$, respectively, where $\sigma_e$ is as described in~\eqref{eq:consecutivestuck}. Then, at least one of $S_1$ or $S_2$ has at most one declared erasure.   
\end{lemma}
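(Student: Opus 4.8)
The plan is to analyze how the burst stuck-at errors of~\eqref{eq:consecutivestuck} act on the partition of symbol values into blocks, and to argue that a burst of length at most $t$, once restricted to a window of consecutive values, can straddle at most two consecutive blocks in a given blocking; the two interleaved blockings (with a $t$-shift) then guarantee that in at least one of them the burst sits inside a single block. First I would recall the precise effect of~\eqref{eq:consecutivestuck}: the affected symbols are exactly those of $\sigma$ with values in a set $\{j, j+1, \ldots, j+t_1-1\}$ for some $t_1 \in [t]$, all of which get reset to $j$ in $\sigma_e$; consequently the values $j+1, \ldots, j+t_1-1$ are \emph{missing} from $\sigma_e$, while all other values are still present (value $j$ is present, possibly with multiplicity). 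So the set of ``missing values'' is exactly $\{j+1, \ldots, j+t_1-1\}$, a block of at most $t-1$ consecutive integers, hence a set of consecutive values of total span at most $t-1$.

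Next I would translate the erasure-declaration rule into a statement about which $\sigma^{i,1}$ (resp. $\sigma^{i,2}$) get declared erased. By definition, $\sigma^{i,1}$ is declared erased iff at least one value in the $i$-th length-$2t$ window $W_i^{(1)} := \{2(i-1)t+1, \ldots, 2it\}$ is missing from $\sigma_e$; similarly $\sigma^{i,2}$ uses the shifted window $W_i^{(2)} := \{t + 2(i-1)t+1, \ldots, t+2it\}$. Since the missing-value set $M := \{j+1, \ldots, j+t_1-1\}$ is an interval of length at most $t-1 < t$, it is contained in a union of at most two consecutive windows of the first blocking; in fact, $M$ fits inside one window $W_i^{(1)}$ unless $M$ crosses one of the boundaries $2it$ between consecutive windows, in which case it meets exactly $W_i^{(1)}$ and $W_{i+1}^{(1)}$ and no others. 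Hence the first blocking incurs at most two declared erasures, and incurs exactly one precisely when $M$ avoids all the boundaries $\{2it : i \ge 1\}$.

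The final step is the pigeonhole-style observation that makes the two blockings complementary: the window boundaries of the first blocking are the integers divisible by $2t$ of the form $2it$, and those of the second blocking are shifted by $t$, i.e., the integers of the form $t + 2it = (2i+1)t$. Since $M$ is an interval of length at most $t-1$, it can contain at most one multiple of $t$ (consecutive multiples of $t$ are $t$ apart, and $|M| \le t-1$), hence at most one boundary of $S_1$ together with the boundaries of $S_2$ combined; more precisely, the boundaries of $S_1$ and of $S_2$ are disjoint and interleaved along the value axis with consecutive boundaries exactly $t$ apart, so an interval of length at most $t-1$ can contain a boundary of at most one of the two blockings. If $M$ contains no boundary of the first blocking, $S_1$ has a single declared erasure; otherwise $M$ contains no boundary of the second blocking, so $S_2$ has a single declared erasure. (When $t_1 = 1$ the set $M$ is empty and neither blocking has any erasure, which also satisfies the claim.) The main obstacle is being careful with the off-by-one bookkeeping at the window endpoints — in particular whether a value landing exactly on a boundary $2it$ is counted in $W_i^{(1)}$ or $W_{i+1}^{(1)}$, and ensuring the $t$-shift really does place the two boundary lattices $\{2it\}$ and $\{(2i+1)t\}$ in staggered position so that no length-$(t-1)$ interval can hit both — but this is exactly the reason the shift is chosen to be $t$ and the block length $2t$, so it goes through.
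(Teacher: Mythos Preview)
Your argument is correct and is essentially the same as the paper's: the paper phrases it as a two-case analysis on where the smallest stuck value $j_1$ falls relative to the half-block points, while you phrase the same dichotomy as an interval-versus-boundary argument, but both hinge on the fact that the missing values form a run of at most $t-1$ consecutive integers and that the two block partitions have their endpoints staggered by $t$. Your write-up is in fact a bit more explicit about the bookkeeping (identifying the missing set as $\{j+1,\ldots,j+t_1-1\}$ and handling the $t_1=1$ case) than the paper's very terse proof.
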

\begin{proof}
Let $j_1$ be the smallest symbol value that got stuck. If $(2i-1)t+1\le j_1\le 2it$ for some $i\in [\lceil\frac{n}{2t}\rceil]$, then only a single erasure of $\sigma^{i,1}$ is declared in $S_1$. On the other hand, if $t+(2i-1)t+1\le j_1\le t+2it$ for some $i\in [\lceil\frac{n-t}{2t}\rceil]$, then only a single erasure of $\sigma^{i,1}$ is declared in $S_1$. Note that the values of the stuck-at symbols can be inferred from $\sigma_e$. 
\end{proof}
According to Lemma~\ref{lem:consecutivestuck}, it suffices to add redundant symbols to protect one permutation projection erasure in $S_1$ and $S_2$, respectively, to correct a burst stuck-at error of length at most $t$. This can be done by representing each permutation projection $\sigma^{i,1}$ or $\sigma^{i,2}$ via a vector of $t$ symbols over an alphabet of size $t$. Then, we use 
\begin{equation}\label{eq:r1r2}
R_1=\oplus_{i\in [\lceil\frac{n}{2t}\rceil]}\sigma^{i,1}, \;\;\;
p_2 =\oplus_{i\in [\lceil\frac{n-t}{2t}\rceil]}\sigma^{i,2}    
\end{equation}
 to protect $S_1$ and $S_2$ from a single erasure, respectively, where $\oplus$ denotes the symbol-wise addition of $\sigma^{i,1}$ or $\sigma^{i,2}$ modulo $t$. 
Let the concatenation of $R_1$ and $R_2$ be the $t$-ary representation of an integer in the set $\{0,\ldots,t^{4t}-1\}$ and represent the integer by $t'= \frac{4t\log t}{\log n}$ symbols $(r_1,\ldots,r_{t'})$ over an alphabet of size $n$. We encode $\sigma$ and the redundant symbols $(r_1,\ldots,r_{t'})$ that represent $R_1$ and $R_2$ using $n+t'+1$ symbols in total, where symbols with values $n+t'+1-(t+1)(t'+1)+(t+1)i$, $i\in[t']$ are used to encode $(r_1,\ldots,r_{t'})$. We then use the symbol with value $n+t'+1$ to encode an $n$-ary symbol $\sum^{t'}_{i=1}r_i\bmod n$, which represents the redundancy to protect $(r_1,\ldots,r_{t'})$ from a single erasure. The remaining $n$ symbols in the set $V=[n+t'+1]\backslash (\cup_{i\in\{0,\ldots,t'\}}\{n+t'+1-i(t+1)\})$ are used to encode $\sigma$, where $\sigma(i)$ is replaced by the $i$th smallest value in $V$.

\textbf{Encoding:}
\begin{itemize}
    \item[\textbf{(1)}] Given a permutation $\sigma\in\mathcal{S}_n$, use the symbols of values in $V=[n+t'+1]\backslash (\cup_{i\in\{0,\ldots,t'\}}\{n+t'+1-i(t+1)\})$ to encode $\sigma$. More specifically, let $\mathcal{F}(\sigma)(i)$ be the $\sigma(i)$th smallest value in $V$, $i\in[n]$.  
    \item[\textbf{(2)}] Find the sequences $S_1$ and $S_2$ according to~\eqref{eq:s1s2}, and then proceed to compute $R_1$ and $R_2$ according to~\eqref{eq:r1r2}, where $\sigma$ is replaced by $\mathcal{F}(\sigma)$, $\sigma^{-1}(j)$, $j\in[n]$ is replaced by $\sigma^{-1}(v_j)$, and $v_j$ is the $j$th smallest value in $V$. Represent $R_1$ and $R_2$ using a sequence of $t'$ symbols $r_1,\ldots,r_{t'}$ over an alphabet size $n$. Let $r_{t'+1}=\oplus_{i\in[t']}r_i$, where $\oplus$ is the sum modulo $n$.
    \item[\textbf{(3)}] Insert $n-t(t'+1)+(t+1)i$, $i\in[t'+1]$ after the $r_i$th (or $n-r_i$ if $r_i=0$) symbol in $\mathcal{F}(\sigma)$. If $r_i$ and $r_j$, $i<j,$ have the same value, insert $n+t'+1-(t+1)(t'+1)+(t+1)j$ after $n+t'+1-(t+1)(t'+1)+(t+1)i$, where $n+t'+1-(t+1)(t'+1)+(t+1)i$ is inserted after the $r_i$th symbol in $\mathcal{F}(\sigma)$. 
\end{itemize}
Let the output of the encoding procedure be $\mathcal{E}_b(\sigma)$. The decoding procedure is the reverse of the encoding procedure, explained in what follows.

\textbf{Decoding:}
\begin{itemize}
    \item[\textbf{(1)}] Given an erroneous permutation $\mathcal{E}^{e}_b(\sigma)$ of $\mathcal{E}_b(\sigma)$, if none of the redundant symbols with values $n-t(t'+1)+(t+1)i$, $i\in[t'+1]$ are missing or repeated, 
    let $r_i$, $i\in[t'+1]$ be the number of symbols with values among $V$ and placed at positions ahead of the symbol with value $n-t(t'+1)+(t+1)i$, i.e.,
    \begin{align}\label{eq:ri}
r_i=|\{j:j<a,\mathcal{E}^{e}_b(\sigma)(a)=(n-t(t'+1)+(t+1)i)\mathcal{E}^{e}_b(\sigma)(j)\in V\}|        
    \end{align}
    is the number of symbols in $\mathcal{E}^{e}_b(\sigma)$ that precede $n-t(t'+1)+(t+1)i$. Otherwise, let $n-t(t'+1)+(t+1)i$ be the missing or repeated symbol value for some $i\in[t'+1]$ and let $j_1,j_2,\ldots,j_{t+1}$ be the positions of the repeated symbols in $\mathcal{E}^{e}_b(\sigma)$. Find the unique position $j_{s}$ among $s\in[t+1]$, such that 
    if $\mathcal{E}^{e}_b(\sigma)(j_s)=n-t(t'+1)+(t+1)i$, then the sum of values of $r_i$ modulo $n$, where $r_i$ is given by~\eqref{eq:ri}, $i\in[t+1]$, equals $0$. Then, let $r_i$ be the corresponding number given by~\eqref{eq:ri}.
    \item[\textbf{(2)}] Let $\hat{\mathcal{F}}^{e}_b(\sigma)$ be the subsequence of $\mathcal{E}^{e}_b(\sigma)$ obtained by removing symbols with values $n-t(t'+1)+(t+1)i$, $i\in[t+1]$, where the symbol $\mathcal{E}^{e}_b(\sigma)(j_s)=n-t(t'+1)+(t+1)i$ obtained from Step $(1)$ is removed as well. Declare erasures of $\sigma^{i,1}$ and $\sigma^{i,2}$ in $S_1$ and $S_2$, where $\sigma^{i,1},\sigma^{i,2}, S_1$, and $S_2$ are defined in~\eqref{eq:s1s2} and~\eqref{eq:r1r2}, if at least one value among the $2(i-1)t+1$th,$\ldots,2it$th smallest or the $t+2(i-1)t+1$th,$\ldots,t+2it$th smallest entries in $V$ is missing in $\mathcal{E}^{e}_b(\sigma)$, respectively. Note that to compute $S_1$ and $S_2$ in~\eqref{eq:s1s2}, we replace $\sigma^{-1}(j)$, $j\in[n]$, by $\sigma^{-1}(v_j)$, where $v_j$ is the $j$th smallest number in $V$.
    \item[\textbf{(3)}] Find at least one of $S_1$ and $S_2$ that has a single erasure of $\sigma^{i,1}$ or $\sigma^{i,1}$, respectively. Suppose $S_1$ has a single erasure $\sigma^{i,1}$; then, it can be corrected with the help of $R_1$ defined in~\eqref{eq:r1r2}, which is part of $(r_1,\ldots,r_{t'})$ retrieved from Step (1). 
    Once $\sigma^{i,1}$ is recovered, we correct the burst stuck-at error as follows. Let $i_1<\ldots<i_{2t}$ be the positions of symbols that are in $\sigma^{i,1}$, which can be determined since the positions of other $\sigma^{j,1}$, $j\in[n]\backslash\{i\}$ can be determined as well. Then, let $\hat{\mathcal{F}}'_b(\sigma)(i_\ell)=v_{2(i-1)(t+1)+\sigma^{i,1}(\ell)}$ for $\ell\in[2t]$.
    \item[\textbf{(4)}] Recover $\sigma$ from $\hat{\mathcal{F}}^{e}_b(\sigma)$ by letting $\sigma(j)=i$ if $\hat{\mathcal{F}}^{e}_b(\sigma)(j)=v_i$. 
\end{itemize}
We now prove the correctness of the encoding/decoding procedures. We first show that $(r_1,\ldots,r_{t'})=(R_1,R_2)$ via the following lemma.
\begin{lemma}\label{lem:uniqueri}
There is a unique position $j_s$ for some $s\in [t+1]$ in Step (1) in the decoding procedure such that 
by letting $\mathcal{E}^{e}_b(\sigma)(j_s)=n-t(t'+1)+(t+1)i$ and letting $r_i$ be given by~\eqref{eq:ri}, $i\in[t+1]$, the sum of the $r_i$ values modulo $n$ equals $0$.      
\end{lemma}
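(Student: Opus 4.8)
The plan is to argue in three stages: (1) a burst of at most $t$ stuck-at errors disturbs at most one of the $t'+1$ redundant symbols; (2) conditioned on this, recovering the redundancy tuple $(r_1,\ldots,r_{t'+1})$ is a single-erasure-correction problem; and (3) the single $\mathbb{Z}_n$ parity relation defining $r_{t'+1}$ corrects that erasure, and the search over the candidate positions $j_1,\ldots$ in decoding Step~(1) realizes it, with the asserted uniqueness.

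For stage (1): the redundant symbols carry values $w_i:=n-t(t'+1)+(t+1)i$, $i\in[t'+1]$, which form an arithmetic progression of common difference $t+1$. A burst stuck-at error of length $t_1\le t$ collapses a window of $t_1$ consecutive values onto its smallest element, and this window has width $t_1-1\le t-1<t+1$, so it meets $\{w_1,\ldots,w_{t'+1}\}$ in at most one point $w_{i^*}$. Reading $\mathcal{E}^{e}_b(\sigma)$, the index $i^*$ is identified as the unique redundant value that is missing or repeated; the positions of the repeated value are exactly the candidates for the original location $\mathcal{E}_b(\sigma)^{-1}(w_{i^*})$ of the displaced redundant symbol; and every other $w_i$ appears once, at its original location relative to the data symbols.

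For stage (2): fixing a candidate $j_s$ as the hypothesized location of $w_{i^*}$ determines the intended values of the few data symbols collapsed by the burst (they are the remaining copies of the repeated value under this hypothesis), so one obtains from~\eqref{eq:ri} a candidate tuple $(r_1^{(s)},\ldots,r_{t'+1}^{(s)})$; the true candidate $j_{s^\star}=\mathcal{E}_b(\sigma)^{-1}(w_{i^*})$ reproduces the true tuple, which satisfies $\sum_i r_i\equiv 0\pmod n$ by the choice of $r_{t'+1}$. For stage (3): it suffices to show that no other candidate induces a tuple satisfying $\sum_i r_i\equiv 0\pmod n$; granting this, the true location is the unique position $j_s$ of the lemma and the recovered tuple is the true one.

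The main obstacle is closing the loop between stages (2) and (3): ruling out a wrong candidate whose induced tuple spuriously satisfies the $\mathbb{Z}_n$ parity relation. The natural route is a case analysis splitting ``$w_{i^*}$ repeated'' from ``$w_{i^*}$ missing'' — in the first, several data symbols have been relabeled to $w_{i^*}$ and must be re-counted as data symbols; in the second, a single redundant position has become data-valued — followed by a direct computation of how a wrong choice of $j_s$ perturbs each coordinate $r_i^{(s)}$ relative to the true value, showing the perturbations cannot cancel modulo $n$ unless $j_s$ is the true location. Here one uses the spacing $t+1$ between redundant values (to keep the coordinates $i\ne i^*$ under control), the bound $n\ge 2t(t+1)$, and — where the per-coordinate perturbation can be as large as $\Theta(n)$ — the validity constraints on the recovered redundancy (e.g., that each $r_i$ be a legitimate base-$n$ digit), which rule out such large-perturbation hypotheses. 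With this bookkeeping done, stages (1) and (3) follow immediately, and the lemma is exactly the statement that a unique parity-consistent candidate survives.
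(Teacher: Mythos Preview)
Your stages (1) and (2) are correct and match the paper. The gap is stage (3): you outline a plan---split into ``$w_{i^*}$ repeated'' vs.\ ``$w_{i^*}$ missing,'' compute per-coordinate perturbations $r_i^{(s)}-r_i$, and invoke $n\ge 2t(t+1)$ together with ``validity constraints'' to prevent cancellation modulo $n$---but you never execute it, and as sketched the plan does not close. A single coordinate $r_i^{(s)}$ can indeed shift by as much as $n$ when the hypothesized location $j_s$ crosses the position of $w_i$, so per-coordinate bounds of size $O(t^2)$ are not available; and every candidate $r_i^{(s)}$ automatically lies in $\{0,\ldots,n\}$ (it is a count of data symbols preceding a given position), so no ``validity constraint'' on the $r_i$'s excludes any hypothesis. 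The bound $n\ge 2t(t+1)$ plays no role in this lemma.

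The paper bypasses the per-coordinate analysis with a single computation on the \emph{sum} $a_s=\sum_i r_i^{(s)}$. For two candidates $j_{s_1}>j_{s_2}$, every position strictly between $j_{s_2}$ and $j_{s_1}$ contributes exactly $+1$ to $a_{s_1}-a_{s_2}$: a data symbol there is counted in $r_{i^*}^{(s_1)}$ but not in $r_{i^*}^{(s_2)}$, while a redundant symbol $w_k$ located there has $j_{s_2}$ counted as a preceding data symbol under hypothesis $s_1$ but not under $s_2$; one more unit comes from $j_{s_2}$ itself. Hence $a_{s_1}-a_{s_2}\equiv j_{s_1}-j_{s_2}\pmod n$, so distinct candidate positions give distinct residues $a_s\bmod n$ and only the true one hits $0$. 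If you push your per-coordinate computation through honestly you will recover exactly this identity; the case split, the $n\ge 2t(t+1)$ bound, and the validity constraints are all unnecessary.
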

\begin{proof}
Note that the burst stuck-at error affects at most one redundant symbol among $n-t(t'+1)+(t+1)i$, $i\in[t'+1]$. 
By Step (2) and Step (3) of the encoding procedure, the position of the symbol $n-t(t'+1)+(t+1)i$ in the encoding satisfies $\sum^{t'+1}_{i=1}r_i\equiv 0\bmod n$. We now show that different choices of $s\in[t+1]$ result in different modulo sum values $\sum^{t'+1}_{i=1}r_i \bmod n$. Let $a_s=\sum^{t'+1}_{i=1}r_i\equiv 0\bmod n$, $s\in[t+1],$ when $j_s$ is selected. Note that for $j_{s_1}>j_{s_2}$, we have
\begin{align*}
a_{s_1}-a_{s_2}\equiv &|\{j:j<j_{s_1},j>_{s_2},\mathcal{E}^{e}_b(\sigma)(j)\in V\}|+1\\
&+ |\{j:j<j_{s_1},j>_{s_2},\mathcal{E}^{e}_b(\sigma)(j)\in ([n+t'+1]\backslash V)\}|\\
\equiv &j_{s_1}-j_{s_2}\bmod n. 
\end{align*}
Hence, $a_s$ are different for different choices of $s\in [t+1]$.
\end{proof}
From Lemma~\ref{lem:uniqueri}, we know that $(r_1,\ldots,r_{t})$ can be correctly recovered from $\mathcal{E}^{e}_b$ during Step (1) of decoding. From Lemma~\ref{lem:consecutivestuck}, an erasure of either $\sigma^{i_1,1}$ for some $i_1\in [\lceil\frac{n}{2(t+1)}\rceil]$ or $\sigma^{i_2,2}$ for some $i_2\in [\lceil\frac{n}{2(t+1)}\rceil]$ in $S_1$ or $S_2$, respectively, can be identified such that $\sigma^{i_1,1}$ or $\sigma^{i_2,2}$ is the unique erasure in $S_1$ or $S_2$, respectively. In addition, the location of the symbols onto which $\sigma^{i_1,1}$ or $\sigma^{i_2,2}$ is projected can be deduced. Then, from the redundancy $(r_1,\ldots,r_{t})$ recovered in Step (1), $\sigma^{i_1,1}$ or $\sigma^{i_2,2}$ can be reconstructed, and in turn, from them one can infer the values of the repeated symbols in $\hat{\mathcal{F}}^{e}_b(\sigma)$ of Step (3) of decoding. Thus, one can recover $\mathcal{F}_b(\sigma)$. Finally, $\sigma$ can be recovered from the correctly decoded $\mathcal{F}_b(\sigma)$ in Step (1) of encoding. 

\subsection{The stuck-at errors model under rank modulation}
We now consider stuck-at errors for cases where the symbol values in the erroneous permutation only depend on the rankings of the average tail lengths (no quantization). Consider Example~\ref{ex:3} where the information is encoded by the permutation $\sigma = (9,1,4,2,5,8,3,6,7)$. We consider the inverse $\sigma^{-1}=(\sigma^{-1}(1),\ldots,\sigma^{-1}(9))=(2,4,7,3,5,8,9,6,1)$. It can be shown that $\sigma_e^{-1}$ can be obtained from $\sigma^{-1}$ by a symbol deletion and a symbol erasure where the set of values of the erased symbol and the deleted symbol are known (but which value corresponds to an erasure or deletion is ambiguous). Moreover, the positions of the erasure and the deletion have a difference at most $t=3$. In the example, $\sigma_e^{-1}=(2,?,6,3,8,9,6,1)$, where the question mark in $\sigma_e^{-1}(2)$ can be either $4$ or $5$. It can be seen that $\sigma_e^{-1}$ can be obtained from $\sigma^{-1}$ by deleting the symbol $5$ and erasing the symbol $4$.  
To correct an erasure in $\sigma^{-1}$ the value of which has two possibilities and an additional deletion, we use a set of parity checks that will be able to: (1) Find the correct value of the erased symbol; (2) Correct the deletion when the value of the erased symbol is fixed. For the first setting, we consider parity-checks based on a binary vector indicating the ascending or descending order of symbols, given by $(1,1,1,0,1,1,1,0,0)$ for $\sigma$, as well as the Lehmer encoding (defined in Section \ref{sec:kstuckerrors}) $\mathcal{L}(\sigma)=(0,1,1,2,1,1,4,2,2)$ of $\sigma$. Details will be provided later.

To encode parity checks into symbols of a permutation, we follow a similar approach to the one described in Section~\ref{sec:kstuckerrors} and Section~\ref{sec:burststuckerror} and use the positions of redundant symbols to encode the parity-checks. However, the ideas behind how parity checks are encoded into positions of redundant symbols and how they are decoded are more more involved.   
We now provide a detailed description of the encoding and decoding process.
\begin{theorem}
For any message given in the form of a permutation $\sigma$ of length $n\ge t+12$, there is an encoding $\mathcal{E}_b:\mathcal{S}_n\rightarrow\mathcal{S}_{n+t'+1}$ that maps $\sigma$ to a permutation $\mathcal{E}_r(\sigma)$ of length $n+t'+1$ such that $\prod^n_{j=n-t'+1}j \ge 2(t+2)(2t+1)t^2$. Moreover, $\mathcal{E}_r(\sigma)$ can be corrected from a stuck-at symbol error described in~\eqref{eq:rankmod}.    
\end{theorem}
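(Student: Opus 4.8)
The plan is to pass to the inverse permutation $\pi=\sigma^{-1}$, observe that the error of~\eqref{eq:rankmod} becomes a tightly constrained single deletion together with a single erasure on $\pi$, and then protect $\sigma$ with Tenengolts‑type parity checks carried through the positions of a few redundant symbols, exactly in the style of Theorems~\ref{thm:kstuckerror} and~\ref{thm:burststuckerror}. Write $a=\sigma(i_1)$ and $b=a-t_1$. A direct computation from~\eqref{eq:rankmod} shows that $\sigma_e$, as a multiset word of length $n$, uses exactly the values $[n-1]$ with the single value $b$ repeated (at two positions $p_1<p_2$, one of which is $i_1$ and the other $\sigma^{-1}(b)$) and the value $n$ missing. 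Equivalently, $\sigma_e^{-1}$ is obtained from the word $\pi$ by deleting the entry $\pi(a)=i_1$ and erasing the entry $\pi(b)$, where $b<a$, $a-b=t_1\le t$, $a>m$, and the candidate set of the erased value is exactly $\{\pi(b),i_1\}$ --- in particular it contains the deleted value. Hence the decoder reads off $b$ and $\{p_1,p_2\}$ from $\sigma_e$ but knows neither $t_1$ nor which of $p_1,p_2$ equals $i_1$; assigning $\{p_1,p_2\}$ to (true $\pi(b)$, deleted value) in either of the two ways and reinserting the deleted value at any position $a\in\{b+1,\ldots,\min(b+t,n)\}\cap\{m+1,\ldots,n\}$ always yields a valid permutation, so at most $2t$ candidate permutations must be separated.

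Second, I would attach the parity information following the Lehmer picture of Section~\ref{sec:kstuckerrors}. I track two side strings of $\sigma$: the ascent/descent pattern $\delta(\pi)\in\{0,1\}^n$ of $\pi=\sigma^{-1}$, with $\delta(\pi)(i)=1$ iff $\pi(i)>\pi(i-1)$ (and $\delta(\pi)(1)=1$), and the Lehmer encoding $\mathcal{L}(\sigma)$ of~\eqref{eq:lehmer}. Deleting $\pi(a)$ changes $\delta(\pi)$ only at the merge position, and erasing $\pi(b)$ only at positions $b,b+1$; since $|a-b|\le t$, all affected coordinates lie in a window of length $O(t)$ that contains $b$, and the change in the inversion count $\sum_i\mathcal{L}(\sigma)(i)$ is likewise bounded in terms of that window. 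Consequently a weighted check $\sum_i i\,\delta(\pi)(i)\bmod(2t+1)$ locates the reinsertion position $a$ inside the window, a check on $\sum_i\mathcal{L}(\sigma)(i)\bmod(t+2)$ pins down $t_1$, a single extra parity bit distinguishes the two assignments of $\{p_1,p_2\}$, and a residue in $[t^2]$ records an ordered pair of window positions needed to kill the remaining local rearrangements. Packing all of these residues into one integer $R\in\{0,\ldots,2(t+2)(2t+1)t^2-1\}$ is precisely the amount of information that, by the hypothesis $\prod_{j=n-t'+1}^{n}j\ge 2(t+2)(2t+1)t^2$, can be stored in the ordered positions of $t'$ distinct redundant symbols.

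Third, the encoder and decoder mirror Theorem~\ref{thm:burststuckerror}. The encoder computes $R$ from $\sigma$, reserves $t'$ symbol values near the top of $[n+t'+1]$ spaced by $t+1$ for the redundant symbols, inserts them after the prescribed positions of $\mathcal{F}(\sigma)$ so that their positions spell out the $n$‑ary digits of $R$, and inserts one more symbol whose position records the sum of those digits modulo $n$; this last symbol, together with the $t+1$ spacing, guarantees --- by the argument of Lemma~\ref{lem:uniqueri} --- that even if the stuck‑at error happens to hit a redundant symbol, the integer $R$ is still uniquely recoverable (this is also what forces the ``$+1$'' and the bound $n\ge t+12$, which leaves room for $t'+1$ spaced redundant values). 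The decoder first recovers $R$ in this way, strips the redundant symbols to obtain the multiset permutation $\hat{\mathcal{F}}(\sigma)$ on the non‑redundant symbols, reads off $b$ and $\{p_1,p_2\}$, enumerates the $\le 2t$ candidates described above, recomputes $\delta(\pi)$, $\mathcal{L}(\sigma)$, the inversion count and the disambiguation bits for each, keeps the unique candidate consistent with $R$, and finally removes the value offset to output $\sigma$.

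The main obstacle is the second step, namely proving that the chosen checks are \emph{simultaneously} injective on the whole candidate set: the ``which of $p_1,p_2$'' ambiguity is not independent of the ``where is $a$'' ambiguity, since a wrong guess for the erased value combined with the correct reinsertion position produces exactly a transposition of $\pi$ at positions $b$ and $a$, and other combinations produce length‑$O(t)$ local rearrangements of $\pi$; one must show each such spurious candidate violates at least one of the residues in $R$. Carrying out this accounting is where the precise constant $2(t+2)(2t+1)t^2=2\cdot(t+2)\cdot(2t+1)\cdot t^2$ is pinned down --- the factor $2$ for the binary choice, the two moduli $t+2$ and $2t+1$ for the Tenengolts‑type value/position checks, and $t^2$ for the ordered window‑position pair --- and it constitutes the bulk of the argument. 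A minor additional point, handled exactly as in the burst model, is the interaction of the stuck‑at error with the redundant symbols when $m$ is small, which is dispatched by the spacing $t+1$ and the checksum symbol.
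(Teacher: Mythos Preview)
Your first two steps are essentially the paper's approach: the paper also passes to $\sigma^{-1}$, shows (Lemma~\ref{lem:rankstuck}) that the rank-model error becomes an erasure at position $b=a-t_1$ with a two-element candidate set together with a single deletion within distance $t$, and separates the resulting candidates with checks on the ascent/descent string $\boldsymbol{b}(\sigma^{-1})$ plus a Lehmer-based check. The paper's four checks are $p_1=\sum_j\boldsymbol{b}(j)\bmod 2$, $p_2=\sum_j j\,\boldsymbol{b}(j)\bmod(t+2)$, $p_3=\sum_j\binom{j+1}{2}\boldsymbol{b}(j)\bmod t^2$, and $p_4=\sum_j\mathcal{L}(\sigma^{-1})(j)\bmod(2t+1)$; note that you have the moduli $(t+2)$ and $(2t+1)$ attached to the opposite checks, and the paper's case analysis (Lemma~\ref{lem:correctnessofparity}) uses its specific assignment, so your variant would need its own verification.

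Your third step, however, is not what the paper does, and your appeal to Lemma~\ref{lem:uniqueri} is a gap. The paper does \emph{not} reuse the burst-model machinery (values spaced by $t+1$, checksum mod $n$). Instead it assigns the \emph{consecutive} top values $n+1,\ldots,n+t'+1$ to the redundant symbols; the first $t'$ of them are placed at \emph{distinct} positions in $\{6,\ldots,n\}$ that encode $(p_1,\ldots,p_4)$ via the bijection of Proposition~\ref{prop:mappingp} --- this is precisely where the falling-factorial hypothesis $\prod_{j=n-t'+1}^{n}j\ge 2(t+2)(2t+1)t^2$ enters, whereas your ``$n$-ary digits'' description would yield $n^{t'}$ and does not explain that product --- and the last symbol $n+t'+1$ is placed (Lemma~\ref{lem:insertlastsymbol}) so that the positions sum to $0\bmod(n+1)$. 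The reason Lemma~\ref{lem:uniqueri} does not transfer is that the rank-model error acts globally on the redundant symbols: an error on a non-redundant symbol shifts \emph{all} redundant values down by one simultaneously, and an error on a redundant symbol shifts only those above it, so there is no ``single repeated redundant value among $t+1$ candidates'' to disambiguate. The paper's decoder instead uses the checksum mod $n+1$ together with the position restriction $r'_i\in\{6,\ldots,n\}$ to detect whether the error hit the top redundant symbol, and this argument is specific to the rank-model error structure.
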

\begin{remark}
Note that for each erroneous permutation, there are at least $t$ choices for the original, uncorrupted permutation. Hence, the redundancy of the code is at least $\log t$. 
\end{remark}
For a permutation or a vector $\sigma\in [n]^n$, let 
\begin{align}\label{eq:sigmainverse}
\sigma^{-1}=(\sigma^{-1}(1),\ldots,\sigma^{-1}(n))    
\end{align}
be the inverse vector of $\sigma$, where $\sigma^{-1}(i)=?$ if there are repeated symbols of value $i$ in $\sigma$. Note that there is a one-to-one mapping between $\sigma$ and $\sigma^{-1}$. We consider error correction for the inverse $\sigma^{-1}$. The following lemma shows how a stuck-at symbol error affects $\sigma^{-1}$.
\begin{lemma}\label{lem:rankstuck}
Let $\sigma_e$ be the erroneous version of $\sigma$ described in~\eqref{eq:rankmod}. Let $\sigma_e(i)=a$ and $\sigma_e(i')=a$ be the repeated symbols in $\sigma_e$. Then $\sigma^{-1}\in [n]^{n-1}$ can be obtained from $\sigma_e^{-1}$ by letting $\sigma_e^{-1}(a)=i$ or $\sigma_e^{-1}(a)=i'$ and inserting a symbol of value $i'$ or $i$ after $\sigma_e^{-1}(a+t_1-1)$ or $\sigma_e^{-1}(a+t_2-1)$ for some $1\le t_1\le t$ or $1\le t_2\le t$, respectively. 
\end{lemma}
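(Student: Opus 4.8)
The plan is to track directly how the single rank‑modulation stuck‑at error of~\eqref{eq:rankmod} acts on the \emph{values} of $\sigma$, and then transport that description to the inverse vectors via~\eqref{eq:sigmainverse}. Write $v=\sigma(i_1)$ for the affected label and $a=v-t_1=\sigma_e(i_1)$ for the value it is stuck at. First I would read off the multiset structure of $\sigma_e$ from the three cases of~\eqref{eq:rankmod}: every value in $[v-1]$ stays at its original position (third case), position $i_1$ now carries $a$ (first case), and every value in $\{v+1,\ldots,n\}$ is decremented by one (second case). Hence $\sigma_e$ attains each value in $[n-1]$ exactly once, \emph{except} $a$, which appears twice --- once at position $i_1$ and once at the original position $\sigma^{-1}(a)$ --- while $n$ no longer appears. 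In particular $\{i,i'\}=\{i_1,\sigma^{-1}(a)\}$ are two distinct positions (distinct because $\sigma(i_1)=v\neq a$), and $\sigma_e^{-1}$ is a vector of length $n-1$ with $\sigma_e^{-1}(a)=\,?$.

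Next I would express $\sigma_e^{-1}$ entrywise in terms of $\sigma^{-1}$, again splitting on the indexed value $b$: for $b<v$ with $b\neq a$ the value $b$ keeps its unique position, so $\sigma_e^{-1}(b)=\sigma^{-1}(b)$; for $v\le b\le n-1$ the value $b$ in $\sigma_e$ is the value $b+1$ in $\sigma$, so $\sigma_e^{-1}(b)=\sigma^{-1}(b+1)$; and the slot $b=a$ is the ambiguous one. Listing both inverse vectors side by side,
\[
\sigma_e^{-1}=\bigl(\sigma^{-1}(1),\ldots,\sigma^{-1}(a-1),\;?\;,\sigma^{-1}(a+1),\ldots,\sigma^{-1}(v-1),\sigma^{-1}(v+1),\ldots,\sigma^{-1}(n)\bigr),
\]
whereas $\sigma^{-1}$ is this same list with $\sigma^{-1}(a)$ restored in slot $a$ and $\sigma^{-1}(v)=i_1$ reinserted in slot $v=a+t_1$, i.e.\ immediately after the entry $\sigma_e^{-1}(a+t_1-1)$ (which is the ambiguous slot itself when $t_1=1$). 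Since the second repeated position is exactly $\sigma^{-1}(a)$, resolving $\sigma_e^{-1}(a)$ to whichever of $i,i'$ equals $\sigma^{-1}(a)$ and then inserting the remaining element of $\{i,i'\}$, namely $i_1$, after position $a+t_1-1$ reproduces $\sigma^{-1}$ verbatim. This is precisely one of the two alternatives in the statement, with shift parameter equal to the true $t_1\in[t]$; the symmetric labelling of $\{i,i'\}$ accounts for the ``or'' branch, completing the argument.

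The bookkeeping is routine, so I do not expect a genuine obstacle; the only care needed is at the boundaries --- checking $a\ge 1$ so the stuck value is legitimate, using $t_1\ge 1$ to guarantee $a<v$ so that the ``unchanged'' and ``shifted'' index ranges of $\sigma_e^{-1}$ are disjoint and correctly ordered, and treating $t_1=1$, where the reinsertion point $a+t_1-1=a$ coincides with the ambiguous slot. The real content is the structural identification ``a rank‑modulation stuck‑at error on $\sigma$ corresponds to one deletion plus one two‑valued erasure on $\sigma^{-1}$ whose indices differ by at most $t$''; once the three‑case value accounting above is written out, it follows by inspection. The harder design work --- introducing parity checks that single out the true original among the $\le t$ candidates produced by this characterization --- is deferred to the construction that follows.
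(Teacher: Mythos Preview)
Your proposal is correct and follows essentially the same route as the paper: both arguments read off, case by case from~\eqref{eq:rankmod}, how $\sigma_e^{-1}(b)$ compares to $\sigma^{-1}(b)$ (unchanged for $b<v$, $b\neq a$; shifted by one for $b\ge v$; ambiguous at $b=a$), and then observe that undoing this amounts to resolving the erasure at $a$ to the non-error index and reinserting the error index after slot $a+t_1-1$. Your write-up is in fact a bit more explicit than the paper's about the multiset structure of $\sigma_e$ and the boundary case $t_1=1$, but the underlying argument is the same.
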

\begin{proof}
Since $\sigma_e$ have repeated symbols $\sigma_e(i)=\sigma_e(i')=a$, the stuck-at error occurs at $\sigma(i)$ or $\sigma(i')$. If the stuck-at error occurs at $\sigma(i)$, we have
\begin{align}\label{eq:rankmodinverse}
    \sigma_e^{-1}(j)=\begin{cases}
        \sigma^{-1}(j+1), &\mbox{for $j\ge \sigma(i)$},\\
        \mbox{?,}&\mbox{if $j=a$},\\
        \sigma^{-1}(j),&\mbox{else},
    \end{cases},
\end{align}
which becomes $\sigma^{-1}$ by letting $\sigma_e^{-1}(a)=i'$ and inserting a symbol with value $\sigma^{-1}(\sigma(i))=i$ after the $(\sigma(i)-1)$th symbol in $\sigma_e^{-1}$. In addition, we have  $1\le\sigma(i)-a\le t$.
Similarly, if the stuck-at error occurs at $\sigma(i')$ then $\sigma_e^{-1}$ becomes $\sigma^{-1}$ by letting $\sigma_e^{-1}(a)=i$ and inserting a symbol with value $\sigma^{-1}(\sigma(i'))=i'$ after the $(\sigma(i')-1)$th symbol in $\sigma_e^{-1}$, where $1\le \sigma(i')-a\le t$. This proves the claim.
\end{proof}
From Lemma \ref{lem:rankstuck}, it suffices to determine which of the two values between $i$ or $i'$ is the value of the erased symbol and correct the deletion of the symbol of the other value $i'$ or $i$, respectively. To this end, we consider the following binary vector $\boldsymbol{b}(\sigma^{-1})$ that indicates the ascending/descending order of symbols in $\sigma^{-1}$:
\begin{align*}
    \boldsymbol{b}(\sigma^{-1})(i)=\begin{cases}
        1,&\mbox{if $\sigma^{-1}(i)>\sigma^{-1}(i-1)$}\\
        0,&\mbox{else}
    \end{cases}.
\end{align*}
In addition, it is assumed that $\boldsymbol{b}(\sigma^{-1})(1)=1$. The following observation can be verified.
\begin{proposition}\label{obs:1}
A symbol deletion in $\sigma^{-1}(i)$ results in a bit deletion in $\boldsymbol{b}(\sigma^{-1})(i)$ or $\boldsymbol{b}(\sigma^{-1})(i+1)$. Moreover, a symbol substitution in $\sigma^{-1}(i)$ results in one of the following: (1) $(\boldsymbol{b}(\sigma^{-1})(i),$ $\boldsymbol{b}(\sigma^{-1})(i+1))$ changed from $(1,0)$ to $(0,1)$ or vice versa. (2) One of $\boldsymbol{b}(\sigma^{-1})(i)$ and $\boldsymbol{b}(\sigma^{-1})(i+1)$ flipped. (3) No changes in $\boldsymbol{b}(\sigma^{-1})$.
\end{proposition}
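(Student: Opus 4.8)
The plan is to prove both assertions by a direct case analysis, using the single structural fact that $\boldsymbol{b}(\sigma^{-1})(j)$ depends only on the pair $(\sigma^{-1}(j-1),\sigma^{-1}(j))$. Consequently, editing the single entry $\sigma^{-1}(i)$ can affect only the two coordinates $\boldsymbol{b}(\sigma^{-1})(i)$ and $\boldsymbol{b}(\sigma^{-1})(i+1)$, with the obvious modifications at the boundaries: when $i=1$ the coordinate $\boldsymbol{b}(\sigma^{-1})(1)$ is the fixed convention value $1$, and when $i=n$ there is no coordinate $\boldsymbol{b}(\sigma^{-1})(n+1)$. I would set up this reduction first, and treat $2\le i\le n-1$, $i=1$, and $i=n$ as the three regimes.

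For the deletion statement, I would observe that removing $\sigma^{-1}(i)$ leaves every coordinate $\boldsymbol{b}(\sigma^{-1})(j)$ with $j<i$ untouched and shifts every coordinate with $j>i+1$ down by one index without changing its value; the only genuine change is that the pair of bits $\boldsymbol{b}(\sigma^{-1})(i)=\mathbf{1}[\sigma^{-1}(i)>\sigma^{-1}(i-1)]$ and $\boldsymbol{b}(\sigma^{-1})(i+1)=\mathbf{1}[\sigma^{-1}(i+1)>\sigma^{-1}(i)]$ is collapsed into the single new bit $\mathbf{1}[\sigma^{-1}(i+1)>\sigma^{-1}(i-1)]$. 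The heart of the argument is the short observation that this new bit always coincides with one of the two old bits: if the old bits are equal, then $\sigma^{-1}(i)$ lies strictly between its two neighbours, so transitivity of $<$ forces the comparison of the neighbours — hence the new bit — to that common value; if the old bits differ, then $\sigma^{-1}(i)$ is a local extremum among the three, and the (otherwise unconstrained) comparison of the two neighbours makes the new bit equal to whichever of $\boldsymbol{b}(\sigma^{-1})(i),\boldsymbol{b}(\sigma^{-1})(i+1)$ it selects. Thus the induced change in $\boldsymbol{b}(\sigma^{-1})$ is exactly the deletion of one of $\boldsymbol{b}(\sigma^{-1})(i)$ or $\boldsymbol{b}(\sigma^{-1})(i+1)$; for $i=1$ the deleted bit is $\boldsymbol{b}(\sigma^{-1})(2)$ and the convention bit stays $1$, and for $i=n$ the deleted bit is $\boldsymbol{b}(\sigma^{-1})(n)$.

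For the substitution statement, write $x=\sigma^{-1}(i-1)$ and $z=\sigma^{-1}(i+1)$ (both unchanged), and let $y,y'$ be the old and new values of $\sigma^{-1}(i)$; the pair $(\boldsymbol{b}(\sigma^{-1})(i),\boldsymbol{b}(\sigma^{-1})(i+1))$ is $(\mathbf{1}[y>x],\mathbf{1}[z>y])$ before and $(\mathbf{1}[y'>x],\mathbf{1}[z>y'])$ after. The key point is that the comparison of $x$ and $z$ is frozen: when $x<z$ the only pairs such an expression can take are $(0,1),(1,1),(1,0)$, and when $x>z$ they are $(0,1),(0,0),(1,0)$; in particular $(1,1)$ and $(0,0)$ can never be exchanged. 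Enumerating all transitions inside each of these two three-element sets yields precisely: no change (case (3)); exactly one flipped bit (case (2)); or the exchange of $(1,0)$ with $(0,1)$ (case (1)). Finally I would dispose of the boundaries: for $i=1$ only $\boldsymbol{b}(\sigma^{-1})(2)$ can change, which is case (2) or (3), and similarly for $i=n$ only $\boldsymbol{b}(\sigma^{-1})(n)$ can change.

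The computation is routine; the only subtle points — and the ones I would make explicit — are the boundary conventions at $i=1$ and $i=n$, and, in the deletion case, the transitivity argument showing the merged bit must equal one of the two bits it replaces. That single observation is what upgrades "two bits collapse into one" to a genuine single-bit deletion rather than an arbitrary edit, and it is the only place where anything beyond bookkeeping is needed.
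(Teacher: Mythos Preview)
Your argument is correct. The paper does not actually prove this proposition; it simply asserts that ``the following observation can be verified'' and moves on, so there is no ground-truth proof to compare against. Your case analysis is the natural one: the locality observation that $\boldsymbol{b}(\sigma^{-1})(j)$ depends only on $(\sigma^{-1}(j-1),\sigma^{-1}(j))$ immediately confines the effect to positions $i$ and $i+1$, and the two nontrivial points you isolate --- the transitivity argument showing the merged bit after a deletion equals one of the two collapsed bits, and the frozen $x$-versus-$z$ comparison ruling out the $(0,0)\leftrightarrow(1,1)$ transition under substitution --- are exactly what is needed. Your boundary treatment at $i=1$ and $i=n$ is also fine.
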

Based on Proposition~\ref{obs:1} and Lemma~\ref{lem:rankstuck}, we define the following parity-checks for $\sigma^{-1}$:
\begin{align}\label{eq:paritybylehmer}
p_1=&\sum^{n}_{j=1}\boldsymbol{b}(\sigma^{-1})(j)\bmod 2,\;\;\;
p_2=\sum^{n}_{j=1}j\boldsymbol{b}(\sigma^{-1})(j)\bmod (t+2)\nonumber\\
p_3=&\sum^{n}_{j=1}(\sum^j_{\ell=1}\ell)\boldsymbol{b}(\sigma^{-1})(j)\bmod t^2,\;\;\;
p_4= \sum^{n}_{j=1}\mathcal{L}(\sigma^{-1})(j)\bmod (2t+1),
\end{align}
where $\mathcal{L}(\sigma^{-1})$ is 
the Lehmer encoding of $\sigma^{-1}$ defined in~\eqref{eq:lehmer}. 
The following lemma shows that $(p_1,p_2,p_3,p_4)$ can be used to correct a stuck-at symbol error in $\sigma^{-1}$.
\begin{lemma}\label{lem:correctnessofparity}
Let $\sigma_e$ be the erroneous vector described by~\eqref{eq:rankmod} and let $\sigma_e(i)=\sigma_e(i')=a$ be the repeated symbols in $\sigma_e$. Then, any 
two different permutations $\sigma^{-1}_1$ and $\sigma^{-1}_2$ obtained from $\sigma_e^{-1}$ by letting $\sigma_e^{-1}(a)=j_1$ and $\sigma_e^{-1}(a)=j_2$, respectively, for some $j_1,j_2\in\{i,i'\}$, and inserting a symbol with value $\{i,i'\}\backslash\{j_1\}$ and $\{i,i'\}\backslash\{j_2\}$ after the $(a+t_1-1)$th and $(a+t_2-1)$th symbol of $\sigma_e^{-1}$, respectively, where $1\le t_1,t_2\le t$, have different parity-checks $(p_1,p_2,p_3,p_4)$.
\end{lemma}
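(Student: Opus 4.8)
The plan is to show that if $\sigma_1^{-1}$ and $\sigma_2^{-1}$ are distinct candidate reconstructions as in the statement, then the 4-tuple $(p_1,p_2,p_3,p_4)$ distinguishes them. By Lemma~\ref{lem:rankstuck}, any valid reconstruction is obtained from $\sigma_e^{-1}$ by (i) fixing the erased entry $\sigma_e^{-1}(a)$ to one of the two known values $j\in\{i,i'\}$, and (ii) inserting the other value $\{i,i'\}\setminus\{j\}$ immediately after position $a+t_k-1$ for some $1\le t_k\le t$. So there are two sources of discrepancy between $\sigma_1^{-1}$ and $\sigma_2^{-1}$: the choice of $j$ (which of $i,i'$ is the ``erasure'' versus the ``deletion''), and the insertion location (governed by $t_1$ vs.\ $t_2$). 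I would organize the argument into the two cases: \emph{same $j$, different insertion positions} and \emph{different $j$}.

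First I would handle the case $j_1=j_2=j$, so the two reconstructions differ only in where the symbol $c:=\{i,i'\}\setminus\{j\}$ is inserted, say after positions $a+t_1-1$ and $a+t_2-1$ with $t_1\neq t_2$, both in $[t]$. The two permutations agree outside a window of length $\le t$ near position $a$; within that window they are cyclic shifts of one another of the same multiset of values, so $\boldsymbol b(\sigma^{-1})$ changes only in a bounded window and, crucially, the \emph{number} of $1$'s in $\boldsymbol b$ can differ by at most a small amount — here the key is that moving a single inserted symbol past $|t_1-t_2|<t$ neighbours changes $\boldsymbol b$ in a controlled way. I would argue that $p_2$ (a weighted sum of $\boldsymbol b$ with weights $j$, taken mod $t+2$) together with $p_3$ (weights $\sum_{\ell\le j}\ell$, mod $t^2$) already separate the two: the difference in positions is $<t$, so the induced change in $\sum j\,\boldsymbol b(j)$ is nonzero and has absolute value bounded by roughly $t\cdot t = t^2$ or so, which is why the moduli $t+2$ and $t^2$ are chosen — if the first-order weighted sum collides mod $t+2$, the second-order weighted sum $p_3$ cannot also collide mod $t^2$. (This is the standard two-parity trick for locating a single edit, adapted to the fact that we already know the inserted value.) I would also use $p_4$, the Lehmer-sum mod $2t+1$, to pin down $t_1$ versus $t_2$ directly: inserting the same value $c$ after a different number of symbols changes $\mathcal L(\sigma^{-1})$ in a way linear in the shift, and $|t_1-t_2|\le t-1 < 2t+1$.

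Next, the case $j_1\neq j_2$, i.e.\ $\{j_1,j_2\}=\{i,i'\}$: now $\sigma_1^{-1}$ puts value $i$ at position $a$ and value $i'$ somewhere after $a$, while $\sigma_2^{-1}$ puts $i'$ at position $a$ and $i$ somewhere after $a$. These differ by a transposition-like rearrangement of the two values $i,i'$ (whose positions in $\sigma^{-1}$ differ by at most $t$, since both insertion offsets are in $[t]$), possibly composed with a small position shift. Here I would lean on $p_1$ (parity of the number of ascents) and $p_2$: swapping which of $i,i'$ sits at the fixed position $a$ typically flips at least one bit of $\boldsymbol b(\sigma^{-1})$, and the residual ambiguity in position is again $<t+2$, covered by $p_2$; if $\boldsymbol b$ happens to be unchanged (Proposition~\ref{obs:1}(3)), then the two permutations still differ, and the Lehmer parity $p_4$ detects the reordering of $i$ and $i'$ because the number of larger-valued symbols preceding $i$ versus $i'$ differs by exactly their order, a quantity bounded by $t$. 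I would enumerate the sub-cases of Proposition~\ref{obs:1} to make sure at least one of $p_1,p_2,p_4$ fires.

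The main obstacle I anticipate is bookkeeping the interaction between the two degrees of freedom simultaneously: in the fully general comparison, $\sigma_1^{-1}$ and $\sigma_2^{-1}$ may differ in \emph{both} the choice of $j$ and the insertion offset, and I must verify that no conspiracy makes all four parities agree. The clean way around this is to note that any two reconstructions are connected through a reconstruction with an intermediate choice, or simply to bound, for each parity, the maximum possible magnitude of the difference $|p_k(\sigma_1^{-1})-p_k(\sigma_2^{-1})|$ over the lift to the integers, and check it is strictly less than the corresponding modulus unless the difference is zero — then show the differences cannot be simultaneously zero by a dimension/ordering argument. Getting the modulus sizes $2,\ t+2,\ t^2,\ 2t+1$ to exactly cover all cases (which is also what forces the redundancy constraint $\prod_{j=n-t'+1}^n j\ge 2(t+2)(2t+1)t^2$ in the theorem) is the delicate part; the rest is routine verification against Proposition~\ref{obs:1} and Lemma~\ref{lem:rankstuck}.
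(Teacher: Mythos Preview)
Your case split on $j_1=j_2$ versus $j_1\neq j_2$ is a natural first instinct, but it does not line up with the work the four parities actually do, and your assignment of parities to cases is inverted relative to what is needed. The paper's decomposition is instead on how the ascent/descent vector $\boldsymbol b(\sigma^{-1}_{e1})$ differs from $\boldsymbol b(\sigma^{-1}_{e2})$ at positions $a,a+1$ (the three outcomes of Proposition~\ref{obs:1}); this is the right axis because $p_1,p_2,p_3$ are all functions of $\boldsymbol b$ alone, so the proof first uses $p_1,p_2,p_3$ to force $\boldsymbol b(\sigma^{-1}_1)=\boldsymbol b(\sigma^{-1}_2)$, and only then invokes $p_4$ to finish.

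Concretely, there are two genuine gaps in your plan. First, in the $j_1\neq j_2$ case you claim ``at least one of $p_1,p_2,p_4$ fires,'' but this is exactly the case where $p_3$ is indispensable. When the substitution at position $a$ flips $\boldsymbol b$ according to Proposition~\ref{obs:1}(1) or (2), $p_1$ only forces the two inserted bits to have the same value, and $p_2$ being equal still leaves a one-element discrepancy between the two sets of shifted $1$'s; it is the second-order weighted sum $p_3\bmod t^2$ that yields the contradiction. Without $p_3$ your argument here does not close. Second, in the $j_1=j_2$ case you invoke $p_4$ to ``pin down $t_1$ versus $t_2$,'' but in fact $p_4$ is not needed there at all: once $p_2$ forces $\boldsymbol b(\sigma^{-1}_1)=\boldsymbol b(\sigma^{-1}_2)$, the same value is being inserted into the \emph{same monotone run} of the same base sequence, which determines a unique permutation outright. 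The Lehmer parity $p_4$ is reserved for the residual situation $j_1\neq j_2$ with $\boldsymbol b(\sigma^{-1}_1)=\boldsymbol b(\sigma^{-1}_2)$, where the two different values $i,i'$ are inserted into the same run and one must distinguish which went where.

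So the skeleton of your plan is reasonable, but the bookkeeping you flag as the ``main obstacle'' is not merely bookkeeping: the roles you assign to $p_3$ and $p_4$ are essentially swapped, and the $j_1\neq j_2$ branch as you have it would not go through.
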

\begin{proof}
Let $\sigma^{-1}_{e1}$ and $\sigma^{-1}_{e2}$ be the vectors obtained from $\sigma_e^{-1}$ by letting $\sigma_e^{-1}(a)=j_1$ and $\sigma_e^{-1}(a)=j_2$, respectively, for some $j_1,j_2\in\{i,i'\}$. Then from Proposition \ref{obs:1}, $\boldsymbol{b}(\sigma^{-1}_{e1})$ and $\boldsymbol{b}(\sigma^{-1}_{e2})$ can be obtained by deleting  $\boldsymbol{b}(\sigma^{-1}_1)(a+t_1)$ or $\boldsymbol{b}(\sigma^{-1}_1)(a+t_1+1)$ from 
$\boldsymbol{b}(\sigma^{-1}_1)$ 
and $\boldsymbol{b}(\sigma^{-1})(a+t_2)$ or $\boldsymbol{b}(\sigma^{-1})(a+t_2+1)$ from  $\boldsymbol{b}(\sigma^{-1}_2)$, respectively, where $1\le t_1,t_2\le t$. Moreover, we have one of the following: (1) $\boldsymbol{b}(\sigma^{-1}_{e1})$ and $\boldsymbol{b}(\sigma^{-1}_{e2})$ differ only in the positions $a$ and $a+1$ such that either $(\boldsymbol{b}(\sigma^{-1}_{e1})(a),\boldsymbol{b}(\sigma^{-1}_{e1})(a+1))=(0,1)$ or $(\boldsymbol{b}(\sigma^{-1}_{e1})(a),\boldsymbol{b}(\sigma^{-1}_{e1})(a+1))=(1,0)$; (2) $\boldsymbol{b}(\sigma^{-1}_{e1})$ and $\boldsymbol{b}(\sigma^{-1}_{e2})$ differ only in position $a$ or $a+1$; (3) $\boldsymbol{b}(\sigma^{-1}_{e1})$ and $\boldsymbol{b}(\sigma^{-1}_{e2})$ are equal. In what follows, we show that if the parity checks $(p_1,p_2,p_3)$ for $\sigma^{-1}_1$ and $\sigma^{-1}_2$ are equal, then $\boldsymbol{b}(\sigma^{-1}_1)=\boldsymbol{b}(\sigma^{-1}_2)$, for all three cases. 

We start with case (3). As mentioned above, $\boldsymbol{b}(\sigma^{-1}_{e1})$ and $\boldsymbol{b}(\sigma^{-1}_{e2})$ are obtained from $\boldsymbol{b}(\sigma^{-1}_{1})$ and $\boldsymbol{b}(\sigma^{-1}_{2})$, respectively, after a single deletion. If $\boldsymbol{b}(\sigma^{-1}_{e1})=\boldsymbol{b}(\sigma^{-1}_{e2})$, $\boldsymbol{b}(\sigma^{-1}_{1})$ and $\boldsymbol{b}(\sigma^{-1}_{2})$ share a common subsequence of length $n-1$. It was shown 
 in~\cite{levenshtein1966binary} that if $\boldsymbol{b}(\sigma^{-1}_{1})$ and $\boldsymbol{b}(\sigma^{-1}_{2})$ share a common subsequence of length $n-1$, the Varshamov-Tenengolt parity check, described by $p_2$ in \eqref{eq:paritybylehmer}, of $\boldsymbol{b}(\sigma^{-1}_1)$ is different from that of $\boldsymbol{b}(\sigma^{-1}_{2})$. Here we briefly illustrate the proof. Note that when the parity-checks $p_1,p_2,$ and $p_3$ of $\boldsymbol{b}(\sigma^{-1}_1)$ and $\boldsymbol{b}(\sigma^{-1}_2)$ are the same, they remain the same when $\boldsymbol{b}(\sigma^{-1}_1)$ and $\boldsymbol{b}(\sigma^{-1}_2)$ flip all their bits. Hence, without loss of generality, we can assume that $\boldsymbol{b}(\sigma^{-1}_1)$ and $\boldsymbol{b}(\sigma^{-1}_2)$ are obtained from $\boldsymbol{b}(\sigma^{-1}_{e1})$ by inserting bit $0$ at positions $a+t'_1$ and $a+t'_2$, respectively, where $1\le t'_1,t'_2\le t+1$. Then 
\begin{align}\label{eq:r2equal}
&\sum^{n}_{j=1}j\boldsymbol{b}(\sigma^{-1}_1)(j)-\sum^{n}_{j=1}j\boldsymbol{b}(\sigma^{-1}_2)(j)\nonumber\\
\equiv& 
 |\{j:j\ge a+t'_1,j\le a+t+1,\boldsymbol{b}(\sigma^{-1}_1)(j)=1\}|\nonumber\\&-|\{j:j\ge a+t'_2,j\le a+t+1,\boldsymbol{b}(\sigma^{-1}_2)(j)=1\}|
 \bmod (t+2).  
\end{align}
Since $0\le|\{j:j\ge a+t'_1,j\le a+t+1,\boldsymbol{b}(\sigma^{-1}_1)(j)=1\}|,|\{j:j\ge a+t'_2,j\le a+t+1,\boldsymbol{b}(\sigma^{-1}_2)(j)=1\}|\le t+1$, we have
$$|\{j:j\ge a+t'_1,j\le a+t+1,\boldsymbol{b}(\sigma^{-1}_1)(j)=1\}|=|\{j:j\ge a+t'_2,j\le a+t+1,\boldsymbol{b}(\sigma^{-1}_2)(j)=1\}|,$$
which implies that the $0$ bit is inserted in the same run or consecutive bits of $0$'s in $\boldsymbol{b}(\sigma^{-1}_{e1})$ to obtain $\boldsymbol{b}(\sigma^{-1}_1)$ or $\boldsymbol{b}(\sigma^{-1}_2)$, respectively, implying that $\boldsymbol{b}(\sigma^{-1}_1)=\boldsymbol{b}(\sigma^{-1}_2)$. 

We now prove that $\boldsymbol{b}(\sigma^{-1}_1)=\boldsymbol{b}(\sigma^{-1}_2)$ for case (1). 
Since the parity checks $p_1$ for $\boldsymbol{b}(\sigma^{-1}_1)$ and $\boldsymbol{b}(\sigma^{-1}_2)$ are the same, $\boldsymbol{b}(\sigma^{-1}_1)$ and $\boldsymbol{b}(\sigma^{-1}_2)$ can be obtained from $\boldsymbol{b}(\sigma^{-1}_{e1})$ and $\boldsymbol{b}(\sigma^{-1}_{e2})$ by inserting a $0$ bit or $1$ bit at positions $a+t'_1$ and $a+t'_2$, respectively, for some $1\le t'_1,t'_2\le t+1$. Again, without loss of generality, we assume that the inserted bits are $0$-bits to obtain $\boldsymbol{b}(\sigma^{-1}_1)$ and $\boldsymbol{b}(\sigma^{-1}_2)$, respectively. Moreover, we assume that $(\boldsymbol{b}(\sigma^{-1}_{e1})(a),\boldsymbol{b}(\sigma^{-1}_{e1})(a+1))=(0,1)$ and $(\boldsymbol{b}(\sigma^{-1}_{e2})(a),\boldsymbol{b}(\sigma^{-1}_{e2})(a+1))=(1,0)$. 
Then, similar to previous case, we have 
\begin{align*}
&|\{j:j\ge a+t'_1,j\le a+t+1,\boldsymbol{b}(\sigma^{-1}_1)(j)=1\}|+1\\
=&|\{j:j\ge a+t'_2,j\le a+t+1,\boldsymbol{b}(\sigma^{-1}_2)(j)=1\}|,
\end{align*}
which implies
\begin{align}\label{eq:setdiff}
&\{j:j\ge a+t'_2,j\le a+t+1,\boldsymbol{b}(\sigma^{-1}_2)(j)=1\}\nonumber\\
=&\{j:j\ge a+t'_1,j\le a+t+1,\boldsymbol{b}(\sigma^{-1}_1)(j)=1\}\cup \{j_1\},    
\end{align}
for some $j_1\in \{a+1,\ldots,a+t+1\}$.
Then, we have
\begin{align*}
 &\sum^{n}_{j=1}(\sum^j_{\ell=1}\ell)\boldsymbol{b}(\sigma^{-1}_1)(j)-\sum^{n}_{j=1}(\sum^j_{\ell=1}\ell)\boldsymbol{b}(\sigma^{-1}_2)(j) \\
 =&a+1+\sum_{j:j\ge a+t'_1,j\le a+t+1,\boldsymbol{b}(\sigma^{-1}_1)(j)=1}(j+1)-\sum_{j:j\ge a+t'_2,j\le a+t+1,\boldsymbol{b}(\sigma^{-1}_1)(j)=1}(j+1)\\
 =&a+1-j_1-1.
\end{align*}
Recall that $1\le j_1\le t+1$. Hence,
\begin{align}\label{eq:r3necase1}
 \sum^{n}_{j=1}(\sum^j_{\ell=1}\ell)\boldsymbol{b}(\sigma^{-1}_1)(j)\not\equiv\sum^{n}_{j=1}(\sum^j_{\ell=1}\ell)\boldsymbol{b}(\sigma^{-1}_2)(j)\bmod t^2,  
\end{align}
if $\boldsymbol{b}(\sigma^{-1}_1)\ne\boldsymbol{b}(\sigma^{-1}_2)$, contradicting the assumption that $p_3$ is equal for $\boldsymbol{b}(\sigma^{-1}_1)$ and $\boldsymbol{b}(\sigma^{-1}_2)$. 

We now show that $\boldsymbol{b}(\sigma^{-1}_1)=\boldsymbol{b}(\sigma^{-1}_2)$ for case (2). Without loss of generality, assume that $\boldsymbol{b}(\sigma^{-1}_{e1})$ and $\boldsymbol{b}(\sigma^{-1}_{e2})$ differ in $a'\in\{a,a+1\}$ such that $\boldsymbol{b}(\sigma^{-1}_{e1})(a')=1$ and $\boldsymbol{b}(\sigma^{-1}_{e2})(a')=0$. Then, since the parity checks $p_1$ for $\boldsymbol{b}(\sigma^{-1}_1)$ and $\boldsymbol{b}(\sigma^{-1}_2)$ are equal, we have that $\boldsymbol{b}(\sigma^{-1}_1)$ and $\boldsymbol{b}(\sigma^{-1}_2)$ can be obtained from $\boldsymbol{b}(\sigma^{-1}_{e1})$ and $\boldsymbol{b}(\sigma^{-1}_{e2})$ by inserting a $0$ bit and $1$ bit at positions $a+t'_1$ and $a+t'_2$, respectively, for some $1\le t'_1,t'_2\le t+1$. We consequently have 
\begin{align*}   &\sum^{n}_{j=1}j\boldsymbol{b}(\sigma^{-1}_1)(j)-\sum^{n}_{j=1}j\boldsymbol{b}(\sigma^{-1}_2)(j)\\
=&|\{j:j\ge a+t'_1,j\le a+t+1,\boldsymbol{b}(\sigma^{-1}_1)(j)=1\}|\\
&-|\{j:j\ge a+t'_2,j\le a+t+1,\boldsymbol{b}(\sigma^{-1}_2)(j)=1\}|-(a+t'_2-a').
\end{align*}
When the parity checks $p_2$ for $\boldsymbol{b}(\sigma^{-1}_1)$ and $\boldsymbol{b}(\sigma^{-1}_2)$ are equal, we have
\begin{align*}
&\{j:j\ge a+t'_1,j\le a+t+1,\boldsymbol{b}(\sigma^{-1}_1)(j)=1\}\\
=&\{j:j\ge a+t'_2,j\le a+t+1,\boldsymbol{b}(\sigma^{-1}_2)(j)=1\}\cup \{j_1,\ldots,j_{a+t'_2-a'}\}    
\end{align*}
for some $j_1,\ldots,j_{a+t'_2-a'}\in \{a+1,\ldots, a+t+1\}$ that are different. Then,
\begin{align*}
 &\sum^{n}_{j=1}(\sum^j_{\ell=1}\ell)\boldsymbol{b}(\sigma^{-1}_1)(j)-\sum^{n}_{j=1}(\sum^j_{\ell=1}\ell)\boldsymbol{b}(\sigma^{-1}_2)(j) \\
 =&\sum_{j:j\ge a+t'_1,j\le a+t+1,\boldsymbol{b}(\sigma^{-1}_1)(j)=1}(j+1)-\sum_{j:j\ge a+t'_2,j\le a+t+1,\boldsymbol{b}(\sigma^{-1}_1)(j)=1}(j+1)-\sum^{a+t'_2-a'}_{\ell=a'+1}\ell\\
 =&\sum^{a+t'_2-a'}_{\ell=1}(j_\ell+1)-\sum^{a+t'_2-a'}_{\ell=a'+1}\ell,
\end{align*}
which is greater than $0$ and smaller than $(\frac{t+1}{2})^2\le t^2$. Hence, we have \eqref{eq:r3necase1}, which contradicts the assumption that the parity-checks $p_3$ for $\boldsymbol{b}(\sigma^{-1}_1)$ and $\boldsymbol{b}(\sigma^{-1}_2)$ are equal.

Next, we show that if $\boldsymbol{b}(\sigma^{-1}_1)=\boldsymbol{b}(\sigma^{-1}_2)$ and the parity check $p_4$ for $\boldsymbol{b}(\sigma^{-1}_1)$ and $\boldsymbol{b}(\sigma^{-1}_2)$ are equal, then we have $\sigma^{-1}_1=\sigma^{-1}_2$. If $\sigma^{-1}_1(a)=\sigma^{-1}_2(a)$, we have that $\sigma^{-1}_1$ and $\sigma^{-1}_2$ are obtained from $\sigma^{-1}_{e1}$ by inserting a symbol with the same value at positions $a+t_1$ and $a+t_2$, respectively, such that $\boldsymbol{b}(\sigma^{-1}_1)=\boldsymbol{b}(\sigma^{-1}_2)$. This implies that the symbol is inserted in the same increasing run or decreasing run in $\sigma^{-1}_{e1}$ to obtain $\sigma^{-1}_1$ and $\sigma^{-1}_2$, respectively, where an increasing or decreasing run in a vector $\boldsymbol{c}=(c(1),\ldots,c(n))$ is a subsequence of consecutive symbols $(c(i+1),\ldots,c(i+j))$ such that $c(i+1)<\ldots<c(i+j)$ or $c(i+1)>\ldots>c(i+j)$, respectively. Hence, $\sigma^{-1}_1$ and $\sigma^{-1}_2$ are equal. On the other hand, if $\sigma^{-1}_1(a)=j_1$ and $\sigma^{-1}_2(a)=j_2$ are different, then $\sigma^{-1}_1$ and $\sigma^{-1}_2$ are obtained from $\sigma^{-1}_{e1}$ and $\sigma^{-1}_{e2}$ by inserting a symbol with values $j_2$ and $j_1$ at positions $a+t_1$ and $a+t_2$, respectively. Moreover, similarly as above, from $\boldsymbol{b}(\sigma^{-1}_1)=\boldsymbol{b}(\sigma^{-1}_2)$ we have that the symbols $j_2$ and $j_1$ are inserted in the same increasing run or decreasing run in $\sigma^{-1}_{e1}$ and $\sigma^{-1}_{e2}$ to obtain $\sigma^{-1}_1$ and $\sigma^{-1}_2$, respectively. Without loss of generality, let $j_2\ge j_1$, then,
\begin{align*}
 &\sum^{n}_{j=1}\mathcal{L}(\sigma^{-1}_2)(j)-\sum^{n}_{j=1}\mathcal{L}(\sigma^{-1}_1)(j)\\
 =&|\{j:j\ge a+1,j\le a+t_1-1,\sigma^{-1}_{e1}(j)<j_1\}|+|\{j:j\ge a+1,j\le a+t_1-1,\sigma^{-1}_{e1}(j)>j_2\}|\\
 &+1-|\{j:j\ge a+1,j\le a+t_2-1,\sigma^{-1}_{e2}(j)<j_2\}|\\
 &-|\{j:j\ge a+1,j\le a+t_2-1,\sigma^{-1}_{e2}(j)>j_1\}|.
\end{align*}
If $j_2$ and $j_1$ are inserted in an increasing run in $\sigma^{-1}_{e1}$ and $\sigma^{-1}_{e2}$, respectively, to obtain $\sigma^{-1}_1$ and $\sigma^{-1}_2$, then we have that $t_1< t_2$. Since $\sigma^{-1}_{e1}(j)=\sigma^{-1}_{e2}(j)$ for $a+1\le j\le a+t_1-1$, then,  
\begin{align*}
    &|\{j:j\ge a+1,j\le a+t_1-1,\sigma^{-1}_{e1}(j)<j_1\}|+|\{j:j\ge a+1,j\le a+t_1-1,\sigma^{-1}_{e1}(j)>j_2\}|\\
    &+1-|\{j:j\ge a+1,j\le a+t_2-1,\sigma^{-1}_{e2}(j)<j_2\}|\\
    &-|\{j:j\ge a+1,j\le a+t_2-1,\sigma^{-1}_{e2}(j)>j_1\}|\\
    =&2|\{j:j\ge a+1,j\le a+t_1-1,\sigma^{-1}_{e1}(j)<j_1,\sigma^{-1}_{e1}(j)>j_2\}|+1,
\end{align*}
which is a value between $1$ and $2t+1$. Hence, 
\begin{align}\label{eq:r4ne}
 &\sum^{n}_{j=1}\mathcal{L}(\sigma^{-1}_2)(j)\not\equiv \sum^{n}_{j=1}\mathcal{L}(\sigma^{-1}_1)(j)\bmod (2t+1).
\end{align}
Similarly, \eqref{eq:r4ne} holds when  $j_2$ and $j_1$ are inserted in an increasing run in $\sigma^{-1}_{e1}$ and $\sigma^{-1}_{e2}$, respectively. Hence, we have that $\sigma^{-1}_1=\sigma^{-1}_2$ whenever the two inverse permutations have the same parity-checks $(p_1,p_2,p_3,p_4)$.
\end{proof}
Lemma \ref{lem:rankstuck} shows that given $\sigma_e$ described by \eqref{eq:rankmod}, $\sigma^{-1}$ and thus $\sigma$ can be recovered with the help of parity checks $(p_1,p_2,p_3,p_4)$ of $\sigma$. In the following, we show how to use redundant symbols to encode $(p_1,p_2,p_3,p_4)$. 
Same as in Section \ref{sec:burststuckerror}, we do not make any assumption on $m$. 
We follow a similar manner to the one in Section \ref{sec:kstuckerrors} and Section \ref{sec:burststuckerror}, where the positions of redundant symbols are used to encode $(p_1,\ldots,p_4)$. However, the encoding from $(p_1,\ldots,p_4)$ to positions of redundant symbols is different from that in Section \ref{sec:burststuckerror}. 

Before presenting the encoding and decoding procedures, we define a useful mapping.
\begin{proposition}\label{prop:mappingp}
There exists a one-to-one mapping $\mathcal{P}$ that maps an integer $\ell \in [\prod^s_{j=s-t+1}j]$ to $t$ different symbols from an alphabet of size $s$.   
\end{proposition}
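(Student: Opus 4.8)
The plan is to build $\mathcal{P}$ as the composition of two classical bijections: a mixed‑radix (factorial‑type) expansion of the integer argument, followed by a Lehmer‑style decoding that turns a string of digits into an ordered list of distinct symbols. Throughout, I read ``$t$ different symbols from an alphabet of size $s$'' as an ordered $t$‑tuple of pairwise distinct elements of $[s]$; this reading is forced by the cardinalities, since $\prod_{j=s-t+1}^{s}j=s(s-1)\cdots(s-t+1)$ is exactly the number of such ordered tuples (equivalently, the number of injections $[t]\hookrightarrow[s]$), which of course requires $t\le s$. Thus it suffices to exhibit a bijection from $[\prod_{j=s-t+1}^{s}j]=\{1,\dots,\prod_{j=s-t+1}^{s}j\}$ onto this set of tuples.

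The first step assigns to position $i\in[t]$ the radix $r_i=s-i+1$, so that $\prod_{i=1}^{t}r_i=\prod_{j=s-t+1}^{s}j$, and establishes, by induction on $t$ via the division algorithm, that every integer $\ell$ with $0\le\ell\le\prod_{i=1}^{t}r_i-1$ admits a unique representation
\begin{equation*}
\ell=\sum_{i=1}^{t}d_i\prod_{k=i+1}^{t}r_k,\qquad 0\le d_i\le r_i-1=s-i,
\end{equation*}
obtained by extracting $d_1=\floor*{\ell\,/\,\prod_{k=2}^{t}r_k}\in\{0,\dots,s-1\}$ (in range because $\ell<r_1\prod_{k=2}^{t}r_k$) and recursing on $\ell-d_1\prod_{k=2}^{t}r_k<\prod_{k=2}^{t}r_k$; a short telescoping check confirms the maximum of the right‑hand side is $\prod_{i=1}^{t}r_i-1$. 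This yields a bijection between $\{0,\dots,\prod_{j=s-t+1}^{s}j-1\}$ and digit strings $(d_1,\dots,d_t)$ with $0\le d_i\le s-i$. The second step decodes such a digit string greedily: set $A_0=(1,2,\dots,s)$, and for $i=1,\dots,t$ let $a_i$ be the $(d_i+1)$‑st smallest element of $A_{i-1}$ (valid since $|A_{i-1}|=s-i+1\ge d_i+1$) and $A_i=A_{i-1}\setminus\{a_i\}$. The output $(a_1,\dots,a_t)$ has distinct entries, and the map is invertible — from $(a_1,\dots,a_t)$ recover $d_i$ as one less than the rank of $a_i$ among $[s]\setminus\{a_1,\dots,a_{i-1}\}$, a ``partial Lehmer code'' entirely analogous to~\eqref{eq:lehmer} — so it is a bijection onto the ordered $t$‑tuples of distinct symbols of $[s]$. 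Composing the two steps, and shifting the argument by one to match the convention $[N]=\{1,\dots,N\}$, gives $\mathcal{P}(\ell):=(\textup{Step 2})\circ(\textup{Step 1})(\ell-1)$, the desired one‑to‑one mapping.

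I do not expect a genuine obstacle: the argument is standard number‑system bookkeeping, and the only point that needs care is keeping the radices aligned with the digit ranges — radix $r_i=s-i+1$ at step $i$ must equal the number of symbols still available at step $i$, so that the bound $d_i\le s-i$ is precisely ``(remaining choices) $-1$'' — together with the off‑by‑one induced by $[N]=\{1,\dots,N\}$. If one prefers, the whole proposition can alternatively be stated by simply invoking the bijection between $\{0,\dots,s^{\underline t}-1\}$ and length‑$t$ prefixes of permutations of $[s]$; I would present the explicit two‑step construction above since the decoding procedure (Step 2) is exactly what the encoder/decoder in this section needs when it ``represents $R_1$ and $R_2$ by $t'$ symbols.''
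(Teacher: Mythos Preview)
Your proposal is correct and follows essentially the same approach as the paper: a mixed-radix expansion of $\ell$ into digits $d_i\in\{0,\dots,s-i\}$, followed by the greedy Lehmer-style decoding that takes $a_i$ to be the $(d_i+1)$th smallest element of $[s]\setminus\{a_1,\dots,a_{i-1}\}$. Your write-up is in fact more careful than the paper's (which only sketches the expansion and asserts invertibility), since you verify the digit ranges, supply the inverse explicitly, and handle the $[N]=\{1,\dots,N\}$ shift.
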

\begin{proof}
Let $\ell =\sum^{t-1}_{i=0}= a_{i+1}\cdot \prod^{s-i}_{j=s-t+1}j$. Then, we have that $a_{i}\in\{0,\ldots,s-i\}$ for $i\in[t]$. We then map $a_{1},\ldots,a_{t}$ into $t$ different integers $j_1,\ldots,j_t$ as follows. Let $j_i$ be the $(a_i+1)$th smallest integer in $[s]\backslash\{j_1,\ldots,j_{i-1}\}$. It is clear that such a mapping is invertible.    
\end{proof}
Let $(p_1,p_2,p_3,p_4)$ be represented by $t'\le\lceil\frac{\log \big(2(t+2)(2t+1)t^2\big)}{\log (n-9)}\rceil\le 5$ different symbols $(r_1,\ldots,r_{t'})$ from an alphabet of size $n-5$, which can be done using the mapping $\mathcal{P}$ in Proposition~\ref{prop:mappingp}. Note that $t'\le 5$ because $2(t+2)(2t+1)t^2\le (n-9)^5$ when $n\ge t+12$.  
Let $r'_i=r_i+5$ for $i\in[t']$. Then $6\le r'_i\le n$. We then insert $n+i$ into $\sigma$ as the $r'_i$th symbol, $i\in [t']$. Finally, we insert the symbol $n+t'+1$ into the $\sigma$ vector (the location of the insertion is described by the following lemma) and obtain a permutation $\mathcal{E}_r(\sigma)$ of length $n+t'+1$ such that $\sum^{t'+1}_{i=1}\mathcal{E}_r(\sigma)^{-1}(n+i)\equiv 0 \bmod (n+1)$. The following lemma shows that such an insertion of $n+t'+1$ is always possible.
\begin{lemma}\label{lem:insertlastsymbol}
For any permutation $\sigma\in\mathcal{S}_{n+t'}$, it is possible to insert a symbol $n+t'+1$ into $\sigma$ to obtain a new permutation $\sigma'$ such that $\sum^{t'+1}_{i=1}\sigma'^{-1}(n+i)\equiv 0 \bmod (n+1)$.    
\end{lemma}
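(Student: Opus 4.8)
The plan is to parametrize the insertion by the position $p\in\{1,\ldots,n+t'+1\}$ at which the new largest symbol $n+t'+1$ is placed in $\sigma'$, and to express the left-hand side of the target congruence as a function of $p$. When $n+t'+1$ is inserted so that it occupies position $p$ in $\sigma'$, the symbols originally at positions $1,\ldots,p-1$ of $\sigma$ keep their positions, while those at positions $p,\ldots,n+t'$ are shifted up by one. Hence $\sigma'^{-1}(n+t'+1)=p$ and, for each $i\in[t']$, $\sigma'^{-1}(n+i)=\sigma^{-1}(n+i)$ if $\sigma^{-1}(n+i)<p$ and $\sigma'^{-1}(n+i)=\sigma^{-1}(n+i)+1$ otherwise. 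Writing $C=\sum_{i=1}^{t'}\sigma^{-1}(n+i)$, which does not depend on $p$, this yields
\[
\sum_{i=1}^{t'+1}\sigma'^{-1}(n+i)=C+p+\bigl|\{i\in[t']:\sigma^{-1}(n+i)\ge p\}\bigr|=:C+f(p).
\]

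Next I would analyze $f(p)=p+\bigl|\{i\in[t']:\sigma^{-1}(n+i)\ge p\}\bigr|$ as $p$ ranges over $\{1,\ldots,n+t'+1\}$. Going from $p$ to $p+1$, the linear term increases by $1$, and the cardinality term decreases by $1$ exactly when position $p$ of $\sigma$ holds one of the symbols $n+1,\ldots,n+t'$, and decreases by $0$ otherwise; thus $f(p+1)-f(p)\in\{0,1\}$, so $f$ is non-decreasing with unit steps. At the endpoints, $f(1)=1+t'$ (all $t'$ of the symbols sit at positions $\ge 1$) and $f(n+t'+1)=n+t'+1$ (no symbol of $\sigma$ sits at position $\ge n+t'+1$). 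Since $f$ is non-decreasing, changes by at most $1$ at each of the $n+t'$ steps, and increases by a total of $n$ over the range, it attains every integer value in the interval $\{t'+1,\ldots,n+t'+1\}$, a set of $n+1$ consecutive integers.

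Finally, $n+1$ consecutive integers form a complete residue system modulo $n+1$, so there is a position $p^{*}$ with $f(p^{*})\equiv -C\pmod{n+1}$. Inserting $n+t'+1$ at position $p^{*}$ then gives $\sum_{i=1}^{t'+1}\sigma'^{-1}(n+i)=C+f(p^{*})\equiv 0\pmod{n+1}$, as required. There is no genuine obstacle here: the only points needing care are the direction of the position shift caused by the insertion and the claim that $f$ skips no intermediate value, both of which are elementary. The argument is essentially a discrete intermediate-value statement for $f$ combined with the observation that $n+1$ consecutive integers cover all residues modulo $n+1$.
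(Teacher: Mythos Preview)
Your proof is correct and follows essentially the same approach as the paper: both express the target sum as a constant $C=\sum_{i=1}^{t'}\sigma^{-1}(n+i)$ plus the function $f(p)=p+|\{i\in[t']:\sigma^{-1}(n+i)\ge p\}|$, observe that $f$ is non-decreasing with unit steps, evaluate $f(1)=t'+1$ and $f(n+t'+1)=n+t'+1$, and conclude that the $n+1$ consecutive values attained form a complete residue system modulo $n+1$. Your exposition is slightly more explicit about why the position shift occurs and why $f$ skips no value, but the argument is the same discrete intermediate-value idea.
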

\begin{proof}
Note that
\begin{align*}
&\sum^{t'+1}_{i=1}\sigma'^{-1}(n+i)-\sum^{t'}_{i=1}\sigma^{-1}(n+i)  \\
=&\sigma'^{-1}(n+t'+1)+|\{j:j\ge \sigma'^{-1}(n+t'+1),\sigma(j)\in \{n+1,\ldots,n+t'\}\}|,
\end{align*}
which increases by at least $0$ and at most $1$ as $\sigma'^{-1}(n+t'+1)$ increases by $1$. Note that when $\sigma'^{-1}(n+t'+1)=1$, we have $\sum^{t'+1}_{i=1}\sigma'^{-1}(n+i)-\sum^{t'}_{i=1}\sigma^{-1}(n+i)=t'+1$ and when $\sigma'^{-1}(n+t'+1)=n+t'+1$, we have $\sum^{t'+1}_{i=1}\sigma'^{-1}(n+i)-\sum^{t'}_{i=1}\sigma^{-1}(n+i)=n+t'+1$. Hence, there always exists a choice of $\sigma'^{-1}(n+t'+1)$ in $[n+t'+1]$ such that $\sum^{t'+1}_{i=1}\sigma'^{-1}(n+i)-\sum^{t'}_{i=1}\sigma^{-1}(n+i)$ is in  $[n+t'+1]\backslash[t']$, which maps bijectively to $\mathbbm{Z}_{n+1}=\{0,\ldots,n\}$ under modulo $(n+1)$ reduction.
\end{proof}
We are now ready to present the encoding procedure.

\textbf{Encoding:}
\begin{itemize}
    \item[\textbf{(1)}] Given a permutation $\sigma\in\mathcal{S}_n$, compute the parity checks $(p_1,p_2,p_3,p_4)$ based on \eqref{eq:paritybylehmer}. Let $(p_1,p_2,p_3,p_4)$ be represented by $t'\le\lceil\frac{\log \big(2(t+2)(2t+1)t^2\big)}{\log (n-10)}\rceil\le 5$ different symbols $(r_1,r_2,\ldots,r_{t'})$ from an alphabet of size $n-5$, using the mapping $\mathcal{P}$ in Proposition~\ref{prop:mappingp}. Let $r'_i=r_i+5$ for $i\in[t']$.
    \item[\textbf{(2)}] Insert $n+i$, $i\in[t']$ into $\sigma$ such that $n+i$ is the $r'_i$th symbol in the new permutation. 
    Denote the resulting permutation by $R(\sigma)$.
    \item[\textbf{(3)}] According to Lemma \ref{lem:insertlastsymbol}, insert $n+t'+1$ into $R(\sigma)$ to obtain $\mathcal{E}_r(\sigma)$ such that $\sum^{t'+1}_{i=1}\mathcal{E}_r(\sigma)^{-1}(n+i)\equiv 0 \bmod (n+1)$. 
\end{itemize}
Upon receiving an erroneous version $\mathcal{E}^{e}_r(\sigma)$ of $\mathcal{E}_r(\sigma)$, 
we apply the following procedure. 

\textbf{Decoding:}
\begin{itemize}
    \item[\textbf{(1)}] Given an erroneous permutation $\mathcal{E}^{e}_r(\sigma)$ of $\mathcal{E}_r(\sigma)$,  
    compute  $\mathcal{E}'^{-1}_r(\sigma)$ based on \eqref{eq:sigmainverse}, by replacing $\sigma$ with $\mathcal{E}^{e}_r(\sigma)$.
    \item[\textbf{(2)}] Let $\mathcal{E}^{e}_r(\sigma)(i)=\mathcal{E}^{e}_r(\sigma)(i')=a$ be the repeated symbols in $\mathcal{E}^{e}_r(\sigma)$. If both $i$ and $i'$ are $>n$, remove the symbols $n+1,\ldots,n+t'$ and declare that the remaining permutation is $\sigma$. If $\min\{i,i'\}\le n$, 
    let $r=-\sum^{t'}_{j=1}\mathcal{E}'_r(\sigma)^{-1}(n+j) \bmod (n+1)$. If $i\not\equiv r\bmod (n+1) $ and $i'\not\equiv r\bmod (n+1)$, let $\mathcal{E}^{-1}_r(\sigma)(n+j)=(\mathcal{E}^{e})^{-1}_r(\sigma)(n+j-1)$ for $j\in[t'+1]$. Recover $r'_j=\mathcal{E}^{-1}_r(\sigma)(n+j)$ and $r_j=r'_j-5$ for $j\in[t']$. Let $\hat{\mathcal{E}}_r(\sigma)$ be the permutation obtained from $\mathcal{E}^{e}_r(\sigma)$ by removing symbols $n,n+1,\ldots,n+t'$. Use the redundant symbols $r_1,\ldots,r_{t'}$ to recover the parity checks $(p_1,p_2,p_3,p_4)$ of $\sigma$ and recover $\sigma^{-1}$ from $\hat{\mathcal{E}}_r(\sigma)$ and thus $\sigma$ according to Lemma~\ref{lem:correctnessofparity}.    
    If at least one of $i$ and $i'$, say $i$, satisfies $i\equiv r\bmod (n+1)$, we have either $i+n+1,i-n-1\notin [n+t'+1]$ or $i\in [t']\cup\{n+2,\ldots,n+t'+1\}$. If  $i+n+1,i-n-1\notin [n+t'+1]$, remove $\mathcal{E}^{e}_r(\sigma)(i)=a$ and the  symbols $n+1,\ldots,n+t'$ from $\mathcal{E}^{e}_r(\sigma)$ and proceed to declare the remaining permutation to be $\sigma$. 
    On the other hand, if $i\in [t']\cup\{n+2,\ldots,n+t'+1\}$, let $r'_j=\mathcal{E}^{-1}_r(\sigma)(n+j)$ and $r_j=r'_j-5$ for $j\in[t']$. Then recover $(p_1,p_2,p_3,p_4)$ from $r_1,\ldots,r_{t'}$. Let $\hat{\mathcal{E}}_r(\sigma)$ be the permutation obtained from $\mathcal{E}^{e}_r(\sigma)$ by removing the symbols $n,n+1,\ldots,n+t'$. Then, use $\hat{\mathcal{E}}_r(\sigma)$ and $(p_1,p_2,p_3,p_4)$ to recover $\sigma^{-1}$ and $\sigma$.  
\end{itemize}
In what follows, we prove the correctness of the decoding procedure. When $i$ and $i'$ in Step (2) of decoding are both $\geq n+1$, only redundant symbols can be erroneous. Thus removing them gives the permutation $\sigma$. In the following we focus on cases when $\min\{i,i'\}\le n$. Note that symbols 
$n+i$, $i\in[t']$, in $\mathcal{E}'_r(\sigma)$ are redundant symbols and that the sum of $n+t'+1$ redundant symbols modulo $n+1$ is $0$. Therefore, the position of the redundant symbol that is not included in the symbols $n+1,\ldots,n+t'$ in $\mathcal{E}^{e}_r(\sigma)$ is equivalent to $r$ modulo $n+1$. Hence, if the positions $i$ and $i'$ of the repeated symbols in  $\mathcal{E}^{e}_r(\sigma)$ are not equivalent to $r$ modulo $n+1$, we have that the stuck-at error does not occur among the redundant symbols. Then the symbols $n,\ldots,n+t'$ correspond to redundant symbols $n+1,\ldots,n+t'+1$ in $\mathcal{E}_r(\sigma)$ and hence can be used to recover $r'_1,\ldots,r'_{t'}$ and thus $(r_1,\ldots,r_{t'})$. Then, we can recover $p_1,\ldots,p_4$ from $(r_1,\ldots,r_{t'})$. Note that after removing the symbols $n,\ldots,n+t'$ from $\mathcal{E}^{e}_r(\sigma)$ we obtain an erroneous version $\hat{\mathcal{E}}_r(\sigma)$ of $\sigma$ described by~\eqref{eq:rankmod}. Hence, $\sigma^{-1}$ and thus $\sigma$ can be recovered from $\hat{\mathcal{E}}_r(\sigma)$ and $(p_1,p_2,p_3,p_4)$ according to Lemma~\ref{lem:correctnessofparity}. 

If one of $i$ and $i'$, say $i$, is equivalent to $r$ modulo $n+1$, then if $i+n+1,i-n-i\notin [n+t'+1]$, we have that $i$ is the position of the redundant symbol and a stuck-at error occurs at $\mathcal{E}_r(\sigma)(i)$. Thus removing $\mathcal{E}^{e}_r(\sigma)(i)=a$ and the symbols $n+1,\ldots,n+t'$ from $\mathcal{E}^{e}_r(\sigma)$ deletes the redundant symbols in $\mathcal{E}_r(\sigma)$ results in $\sigma$. On the other hand, if $i\in [t']\cup\{n+2,\ldots,n+t'+1\}$, we have that the stuck-at error occurs at symbol $n+t'+1$. Otherwise, the missing redundant symbol other than $n+1,\ldots,n+t'$ in $\mathcal{E}^{e}_r(\sigma)$ is located at a position in $[t']\cup\{n+2,\ldots,n+t'+1\}$, which contradicts the fact that the positions of redundant symbols are confined to $t'<5\le r'_j\le n$, $j\in[t']$. Therefore, the symbols $n+1,\ldots,n+t'$ in $\mathcal{E}^{e}_r(\sigma)$ correspond to symbols $n+1,\ldots,n+t'$ in $\mathcal{E}_r(\sigma)$ and thus can be used to recover $r_1,\ldots,r_{t'}$, as well as $(p_1,p_2,p_3,p_4)$. Then, removing the redundant symbols $n+1,\ldots,n+t'$ from $\mathcal{E}^{e}_r(\sigma)$ results in a erroneous version $\sigma^{e}$ of $\sigma$ that is described by~\eqref{eq:rankmod}. Hence, $\sigma^{-1}$ and $\sigma$ can be recovered from $\sigma^{e}$ and $(p_1,p_2,p_3,p_4)$.
\bibliography{scibib}

\begin{thebibliography}{10}
\providecommand{\url}[1]{#1}
\csname url@samestyle\endcsname
\providecommand{\newblock}{\relax}
\providecommand{\bibinfo}[2]{#2}
\providecommand{\BIBentrySTDinterwordspacing}{\spaceskip=0pt\relax}
\providecommand{\BIBentryALTinterwordstretchfactor}{4}
\providecommand{\BIBentryALTinterwordspacing}{\spaceskip=\fontdimen2\font plus
\BIBentryALTinterwordstretchfactor\fontdimen3\font minus
  \fontdimen4\font\relax}
\providecommand{\BIBforeignlanguage}[2]{{%
\expandafter\ifx\csname l@#1\endcsname\relax
\typeout{** WARNING: IEEEtran.bst: No hyphenation pattern has been}%
\typeout{** loaded for the language `#1'. Using the pattern for}%
\typeout{** the default language instead.}%
\else
\language=\csname l@#1\endcsname
\fi
#2}}
\providecommand{\BIBdecl}{\relax}
\BIBdecl

\bibitem{church2012next}
G.~M. Church, Y.~Gao, and S.~Kosuri, ``Next-generation digital information
  storage in dna,'' \emph{Science}, vol. 337, no. 6102, pp. 1628--1628, 2012.

\bibitem{goldman2013towards}
N.~Goldman, P.~Bertone, S.~Chen, C.~Dessimoz, E.~M. LeProust, B.~Sipos, and
  E.~Birney, ``Towards practical, high-capacity, low-maintenance information
  storage in synthesized dna,'' \emph{nature}, vol. 494, no. 7435, pp. 77--80,
  2013.

\bibitem{grass2015robust}
R.~N. Grass, R.~Heckel, M.~Puddu, D.~Paunescu, and W.~J. Stark, ``Robust
  chemical preservation of digital information on dna in silica with
  error-correcting codes,'' \emph{Angewandte Chemie International Edition},
  vol.~54, no.~8, pp. 2552--2555, 2015.

\bibitem{yazdi2015dna1}
S.~H.~T. Yazdi, H.~M. Kiah, E.~Garcia-Ruiz, J.~Ma, H.~Zhao, and O.~Milenkovic,
  ``{D}{N}{A}-based storage: Trends and methods,'' \emph{IEEE Transactions on
  Molecular, Biological and Multi-Scale Communications}, vol.~1, no.~3, pp.
  230--248, 2015.

\bibitem{yazdi2015rewritable1}
S.~Yazdi, Y.~Yuan, J.~Ma, H.~Zhao, and O.~Milenkovic, ``A rewritable,
  random-access {D}{N}{A}-based storage system,'' \emph{Nature Scientific
  Reports}, 2015.

\bibitem{khandelwal2022self}
A.~Khandelwal, N.~Athreya, M.~Q. Tu, L.~L. Janavicius, Z.~Yang, O.~Milenkovic,
  J.-P. Leburton, C.~M. Schroeder, and X.~Li, ``Self-assembled microtubular
  electrodes for on-chip low-voltage electrophoretic manipulation of charged
  particles and macromolecules,'' \emph{Microsystems \& Nanoengineering},
  vol.~8, no.~1, p.~27, 2022.

\bibitem{yazdi2017portable}
S.~Yazdi, R.~Gabrys, and O.~Milenkovic, ``Portable and error-free dna-based
  data storage,'' \emph{Scientific reports}, vol.~7, no.~1, pp. 1--6, 2017.

\bibitem{tabatabaei2022expanding}
S.~K. Tabatabaei, B.~Pham, C.~Pan, J.~Liu, S.~Chandak, S.~A. Shorkey, A.~G.
  Hernandez, A.~Aksimentiev, M.~Chen, C.~M. Schroeder \emph{et~al.},
  ``Expanding the molecular alphabet of dna-based data storage systems with
  neural network nanopore readout processing,'' \emph{Nano letters}, vol.~22,
  no.~5, pp. 1905--1914, 2022.

\bibitem{tabatabaei2020dna}
S.~K. Tabatabaei, B.~Wang, N.~B.~M. Athreya, B.~Enghiad, A.~G. Hernandez, C.~J.
  Fields, J.-P. Leburton, D.~Soloveichik, H.~Zhao, and O.~Milenkovic, ``Dna
  punch cards for storing data on native dna sequences via enzymatic nicking,''
  \emph{Nature communications}, vol.~11, no.~1, pp. 1--10, 2020.

\bibitem{pan2022rewritable}
C.~Pan, S.~K. Tabatabaei, S.~Tabatabaei~Yazdi, A.~G. Hernandez, C.~M.
  Schroeder, and O.~Milenkovic, ``Rewritable two-dimensional dna-based data
  storage with machine learning reconstruction,'' \emph{Nature Communications},
  vol.~13, no.~1, pp. 1--12, 2022.

\bibitem{jiang2009rank}
A.~Jiang, R.~Mateescu, M.~Schwartz, and J.~Bruck, ``Rank modulation for flash
  memories,'' \emph{IEEE Transactions on Information Theory}, vol.~55, no.~6,
  pp. 2659--2673, 2009.

\bibitem{barg2010codes}
A.~Barg and A.~Mazumdar, ``Codes in permutations and error correction for rank
  modulation,'' in \emph{2010 IEEE International Symposium on Information
  Theory}.\hskip 1em plus 0.5em minus 0.4em\relax IEEE, 2010, pp. 854--858.

\bibitem{farnoud2012rank}
F.~Farnoud, V.~Skachek, and O.~Milenkovic, ``Rank modulation for translocation
  error correction,'' in \emph{2012 IEEE International Symposium on Information
  Theory Proceedings}.\hskip 1em plus 0.5em minus 0.4em\relax IEEE, 2012, pp.
  2988--2992.

\bibitem{kuznetsov1974coding}
A.~V. Kuznetsov and B.~S. Tsybakov, ``Coding in a memory with defective
  cells,'' \emph{Problemy peredachi informatsii}, vol.~10, no.~2, pp. 52--60,
  1974.

\bibitem{wachter2015codes}
A.~Wachter-Zeh and E.~Yaakobi, ``Codes for partially stuck-at memory cells,''
  \emph{IEEE Transactions on Information Theory}, vol.~62, no.~2, pp. 639--654,
  2015.

\bibitem{farnoud2013error}
F.~Farnoud, V.~Skachek, and O.~Milenkovic, ``Error-correction in flash memories
  via codes in the ulam metric,'' \emph{IEEE Transactions on Information
  Theory}, vol.~59, no.~5, pp. 3003--3020, 2013.

\bibitem{hassanzadeh2014multipermutation}
F.~F. Hassanzadeh and O.~Milenkovic, ``Multipermutation codes in the ulam
  metric for nonvolatile memories,'' \emph{IEEE Journal on Selected Areas in
  Communications}, vol.~32, no.~5, pp. 919--932, 2014.

\bibitem{levenshtein1966binary}
V.~I. Levenshtein \emph{et~al.}, ``Binary codes capable of correcting
  deletions, insertions, and reversals,'' in \emph{Soviet physics doklady},
  vol.~10, no.~8.\hskip 1em plus 0.5em minus 0.4em\relax Soviet Union, 1966,
  pp. 707--710.

\end{thebibliography}
\bibliographystyle{IEEEtran}
\end{document}